\DeclareMathAlphabet{\mathpzc}{OT1}{pzc}{m}{it}
\newtheorem{propo}{Proposition}[section]
\newtheorem{lemma}[propo]{Lemma}
\newtheorem{thm}[propo]{Theorem}
\newtheorem{remark}[propo]{Remark}
\def\endproof{\hfill$\Box$\vspace{0.4cm}}
\def\di{{\partial i}}
\def\tG{\widetilde{\Gamma}}
\newcommand{\sign}{\text{sign}}
\newcommand{\reals}{{\mathds R}}
\newcommand{\eqnsection}{\renewcommand{\theequation}{\thesection.\arabic{equation}}
      \makeatletter \csname @addtoreset\endcsname{equation}{section}\makeatother}
\def\eps{\epsilon}
\def\cC{{\cal C}}
\def\l|{\left|\left|}
\def\r|{\right|\right|}
\def\E{\mathds E}
\def\1{\mathds 1}
\def\prob{{\mathds P}}
\def\ind{{\mathds I}}
\def\ve{\varepsilon}
\def\de{{\rm d}}
\def\reals{{\mathds R}}
\def\ux{\underline{x}}
\def\uy{\underline{y}}
\def\uz{\underline{z}}
\def\Eu{\mbox{\tiny\rm Eucl}}
\def\cut{{\rm cut}}
\def\uh{\underline{h}}
\def\up{\underline{+1}}
\def\Var{{\rm Var}}
\def\Dir{{\rm Dir}}
\def\dOmega{\partial\Omega}
\def\cS{{\cal S}}
\DeclareMathOperator*{\argmin}{arg\,min}
\begin{document}
\begin{titlepage}
\title{Convergence to Equilibrium in Local Interaction Games \\
and Ising Models}

\author{Andrea Montanari\thanks{~Departments of  Electrical Engineering and Statistics, Stanford University
\newline\indent
${}^\dagger$Department of Management Science and Engineering,
Stanford University} \and  Amin Saberi$^\dagger$}

\date{\today}

\maketitle

\begin{abstract}
Coordination games describe social or economic interactions in which
the adoption of a common strategy has a higher payoff. They are
classically used to model the spread of conventions, behaviors, and
technologies in societies. Here we consider a two-strategies
coordination game played asynchronously between the nodes of a
network. Agents behave according to a noisy best-response dynamics.

It is known that noise removes the degeneracy among equilibria: In
the long run, the ``risk-dominant'' behavior spreads throughout the
network. Here we consider the problem of computing the typical time
scale for the spread of this behavior. In particular, we study its
dependence on the network structure and derive a dichotomy between
highly-connected, non-local graphs that show slow convergence, and
poorly connected, low dimensional graphs that show fast convergence.
\end{abstract}
\end{titlepage}

\pagestyle{empty}

\section{Introduction}
The unprecedented growth of online social network and their
increasing role in the spread of knowledge, behaviors and new
technologies have given rise to a wealth of interesting questions.
Is it possible to explain the emergence of a new phenomenon based on
the dynamics of the interaction among individuals
\cite{KleinbergReview,Young}?


%
\begin{wrapfigure}[11]{r}{5cm}
\begin{picture}(100,100)
\put(0,-5){\includegraphics[scale=0.33]{grid.eps}}
\end{picture}
\end{wrapfigure}

As an example consider a two-dimensional grid and assume that each
node adopts  the new behavior (call it $+1$, the alternative being
$-1$) if at least two of its neighbors have already adopted it. It
is then easy to see that no finite set of $+1$'s can influence the
whole grid, and in fact the influence of any finite set of $+1$'s is
limited to the smallest rectangle that circumscribes them. For
instance, the group of black nodes in the figure on the right does
not expand further.

Now, consider the same dynamics with a small noise, i.e. assume
that, with some small probability $\eps$, agents do not follow the
pre-established rule. This can have dramatic effects. If the gray
node in the figure switches to $+1$ by mistake, then a new layer may
be added to the group of black nodes at no extra (probability) cost.
Of course, the reverse can happen: the block of $+1$'s can be eroded
because of noise. However if the initial block is \emph{large
enough}  (and under some technical assumptions) the former mechanism
will prevail \cite{Neves1,Neves2}. The important point is that
`large enough' means here larger than some \emph{constant} quantity,
and that influence spreads at some positive velocity. This
phenomenon was first discovered in statistical physics, under the
name of `nucleation' and received an intense attention in the
mathematical physics literature over the last 30 years
\cite{Olivieri,Bovier}.

Similar models were developed independently within the context of
evolutionary game theory. For example consider a simple game in
which every individual placed in a  network has to make a decision
between two alternatives. The payoff of an action for each person is
proportional to the number of its neighbors who are taking the same
action. These games, known as coordination games, have been studied
extensively for modeling the emergence of technologies and social
norms \cite{Young,Morris,KleinbergReview,Blume}. The main conclusion
of this line of work is that adding a small random perturbation to
best response dynamics creates an evolutionary force that drives the
system towards a particular equilibrium in which all players take
the same action.

In real-world networks stochasticity  is unavoidable. As a
consequence, we can expect the players to eventually achieve
coordination on a particular equilibrium, irrespective of the initial
state. The present paper
characterizes the {\em rate of convergence} for such dynamics   in
terms of explicit graph quantities. It thus provide the first step
in a longer term program aimed at developing approximation
algorithms to estimate convergence to Nash equilibria.

Our characterization is expressed in terms of tilted cutwidth and
tilted cut of the graph that are dual quantities. The former
provides a path to the $+1$ equilibrium that gives an upper bound on
the converge time. The latter corresponds to a bottleneck along the
highest separating set in the space of configurations. We show that
tilted cut and tilted cutwidth coincide for the `slowest' subgraph
and the convergence time is exponential in this graph parameter.

The proof uses an argument similar to
\cite{Donsker,DiaconisSaloffCoste,JerrumSinclair} to relate hitting
time to the spectrum of an appropriate transition kernel. The
convergence time is then estimated in terms of the most likely path
from the worst-case initial configuration. It turns out  that the
most likely path is the one that implies the lowest decrease of
probability in stationary measure. A delicate argument using the
submodularity of the potential function shows that there exists a
monotone increasing path with this property. In order to prove the
characterization in terms of tilted cut we study the `slowest'
eigenvector and show that it is monotone using a fixed point
argument. We then approximate the eigenvector with a characteristic
function.

The above result allows us to estimate the convergence time for
specific graphs through their isoperimetric function. For example in
interaction graphs that can be embedded in low dimensional spaces,
the dynamics converges in a very short time. On the other hand, for
a wide class of bounded degree graphs such as random regular graphs
or certain small-world networks the convergence may take as long as
exponential in the number of nodes. \vspace{0.4cm}

{\noindent \bf Related work} \vspace{0.2cm}

\noindent There is a very interesting line of work in mathematical
physics leading to very sharp estimates of the convergence times of
specific models: mainly two and three dimensional grids
\cite{BenArous,BovierManzo}. Berger et al. \cite{PeresEtAl} compute
the mixing time of a similar dynamics in terms of cutwidth of the
graph using different techniques from the current paper.

In the game theory literature, one of the criticisms of Nash
equilibria is that its multiplicity makes it hard to predict the
outcome of a play. How do players \emph{learn} to play a specific
equilibrium, and which one do they \emph{select}?  For example, the
grid graph described above shows that the coordination game can have
several equilibria. There is a vast literature in evolutionary game
theory for resolving this problem especially in the context of
coordination games \cite{KMR,Young,Ellison,Blume,FLbook}.


The importance of estimating convergence times  was first stressed
in the pioneering work of Ellison \cite{Ellison}. He argued that the
long-run equilibrium is relevant only if the convergence time is
reasonably small. Ellison studied the rate of convergence for two
extreme interaction graphs: a complete graph and a graph obtained by
 placing individuals on a cycle and connecting all pairs of distance
 smaller than some given range. He showed that the dynamics converges very
 slowly for the former model and very quickly for the latter. Based on
this observation, he concluded that when the interaction is  global
the outcome is determined by  historic factors. In contrast, when
players ``interact with small sets of neighbors,'' we can assume
that evolutionary forces may determine the outcome.

Our result implies that the key property of the network that
captures the rate of convergence is not the number of nodes each
agent interacts with, or the number of edges of the graph. This can
be proved for a large class of (non-reversible) noisy best-response
dynamics including the one of \cite{Ellison}.

\section{Definitions}

A game is played in periods $t = 1,
2,3,\ldots$ among a set $V$ of players.
Each player $i \in V$ has
two alternative strategies as $x_i \in \{+1,-1\}$.
Let $\ux = \{x_i:\, i\in V\}$. The payoff matrix $A$ is a
$2\times 2$- matrix illustrated in the figure.
The players interact on an undirected graph $G=(V, E)$.
The payoff of  player $i$ is $\sum_{j \in \di}{A(x_i, x_j)}$,
where $\di$ is the set of neighbors of vertex $i$.

\begin{wrapfigure}[6]{r}{3cm}
\begin{picture}(40,40)
\put(10,0){\includegraphics[scale=0.33]{matrix.eps}}
\put(17,44){$a, a$}
\put(49,44){$c, d$}
\put(17,12){$d, c$}
\put(49,12){$b, b$}
\end{picture}
\end{wrapfigure}

The payoff matrix $A$ defines a coordination game which means $a >
d$ and $b > c$.  It is easy to verify that for every $i$, the best
response strategy is $\sign(h_i+\sum_{j\in\di}x_j)$, where $h_i=
\frac{a-d-b+c}{a-d+b-c}|\di|\equiv \rho\,|\di|$, with $|\di|$ the
degree of node $i$. We assume that $a - b > d - c$, so that $h_i >
0$ for all $i\in V$ of non-vanishing degree.  Harsanyi and Selten
\cite{HS} named $+$ the ``risk-dominant'' equilibrium, as it
minimizes the utility loss due to a change in the opponent strategy.
Notice that this does not coincide, in general, with the payoff
dominant equilibrium.

Noisy best-response dynamics is specified by a
one-parameter family of Markov chains
$\prob_{\beta}\{\,\cdots\,\}$ indexed by $\beta$.
The parameter $\beta\in\reals_+$ determines how noisy is the dynamics,
with $\beta = +\infty$ corresponding to the noise-free case.
Two type of updates  are naturally defined:
%
%

\vspace{0.1cm}

\noindent\emph{(1) Synchronous updates}. At each step of the chain, each
player draws a new strategy $y_i$ conditionally on its neighbor's
strategies $x_{\di}$ at the previous time step. The conditional distribution
is denoted by $p_{i,\beta}(y_i|\ux_{\di})$.

\vspace{0.1cm}

\noindent \emph{(2) Asynchronous updates}. Each node $i$ updates its value at the arrival time of an independent Poisson clock of rate $1$. The conditional distribution of the new strategy is
again denoted as $p_{i,\beta}(y_i|\ux_{\di})$.

\vspace{0.1cm}

\noindent
The dynamics  of \cite{Ellison} is recovered by the following transition probabilities. Let
$y_i^* = \sign(h_i+\sum_{j\in\di}x_j)$. Then for every player $i$,  $p_{i,\beta}(y_i^*|\ux_{\di})=1-e^{-\beta}$ and
$p_{i,\beta}(-y_i^*|\ux_{\di})=e^{-\beta}$.

A considerable simplification is achieved for the so-called
\emph{heath bath} or \emph{Glauber} kernel
\begin{eqnarray}
p_{i,\beta}(y_i|\ux_{\di}) = \left(1+e^{-2\beta K_i(\ux)y_i}\right)^{-1}
\end{eqnarray}
where $K_i(\ux) = h_i+\sum_{j\in\di}x_j$. . This is also known as
logit update rule which is standard in the discrete choice
literature \cite{McFadden}. It has also been used to model subjects'
empirical choice behavior in laboratory situations \cite{MS,MP}. In
this context it has been studied by Blume \cite{Blume}. The
corresponding Markov chain is reversible with respect to the
stationary distribution $\mu_{\beta}(\ux) \propto \exp(-\beta
H(\ux))$, with

\vspace{-0.9cm}

\phantom{a}

\begin{eqnarray}
H(\ux) & = & -
\sum_{(i,j)\in E} x_ix_j-\sum_{i\in V} h_i x_i\, ,\label{eq:EnergyAsynchron}
\end{eqnarray}

\vspace{-0.5cm}\phantom{a}

\noindent in the case of asynchronous dynamics. This is the energy function of
the Ising model; an analogous expression can be written for synchronous
updates. In both the above models
the stationary distribution for large $\beta$   concentrates
around the all-$(+1)$ configuration. In other words, these dynamics predict
 that
in the long run, the play will converge to the risk-dominant equilibrium.

In the following we will often adopt the equivalent representation
of configurations as subsets of vertices $S\subseteq V$, whereby
$i\in S$ if and only if $x_i=+1$, and, with a slight abuse of
notation, we shall denote by $H(S)$ the corresponding energy. If
$|S|_h \equiv \sum_{i\in S}h_i$, then $H(S) - H(\emptyset)=
2\, \cut(S,V\setminus S)- 2|S|_h$. It is important to notice that
$H(\,\cdot\,)$ is submodular.

Our aim is to determine whether this
prediction is realized in a reasonable time.
To this end, we let $T_+$ denote the hitting time to the all-$(+1)$
configuration, and define the \emph{typical hitting time} for
$\up$  as

\vspace{-1.0cm}

\phantom{a}

\begin{eqnarray}
\tau_+(G;\uh) = \sup_{\ux}\;
\inf\left\{t\ge 0:\, \prob_{\beta}^{\ux}\{T_+\ge t\}\le e^{-1}
\right\} \, .\label{eq:TypicalHitting}
\end{eqnarray}
For the sake of brevity, we will often refer to this as the hitting
time, and drop its arguments.

\section{Main results}

Our first step is to express the large-$\beta$ (low-noise)
behavior of $\tau_+(G;\uh)$ in terms of  graph-theoretical quantities.
Let $n=|V|$ be the number of players.
Given $\uh = \{h_i:\, i\in V\}$, and $U\subseteq V$,
we let $|U|_h\equiv \sum_{i\in U} h_i$.
We define the \emph{tilted cutwidth of} $G$ as
\begin{eqnarray}
\label{eq:TiltedCutwidthDef} \Gamma(G;\uh) \equiv
\min_{S:\emptyset\to V} \max_{t\le n}\, \left[\cut(S_t,V\setminus
S_t)-|S_t|_h\right]\, .
\end{eqnarray}
Here the $\min$ is taken over all \emph{linear orderings} of the vertices
$i(1),\dots,i(n)$, with $S_t \equiv\{i(1),\dots,i(t)\}$. Note
that if for all $i$,
$h_i = 0$, the above is equal to the cutwidth of the graph.

Given a collection of subsets of $V$, $\Omega\subseteq 2^V$ such
that $\emptyset\in\Omega$, $V\not\in\Omega$,
we let $\dOmega$ be the collection of
couples $(S,S\cup\{i\})$ such that  $S\in \Omega$ and $S\cup
\{i\}\not\in\Omega$. We then define the \emph{tilted cut} of $G$ as
\begin{eqnarray}
\Delta(G;\uh) \equiv \max_{\Omega}
\min_{(S_1,S_2) \in\dOmega}\,
\max_{i=1,2}\left[\cut(S_i,V\setminus S_i)-|S_i|_h\right]\, ,\label{eq:TiltedCutDef}
\end{eqnarray}
the maximum being taken over \emph{monotone} sets $\Omega$
(i.e. such that $S\in \Omega$ implies $S'\in\Omega$ for all
$S'\subseteq S$). It thus coincide

It is known that, in the case $h_i=0$, the mixing time of
Glauber dynamics is at most exponential in the cutwidth of $G$
\cite{PeresEtAl}.
The following result provides a generalization to the case $h_i>0$
of interest here, in the limit of large $\beta$. Since
$\Gamma(G;\uh)$ (as well as $\Delta(G;\uh)$)
is decreasing in $\uh$, the upper bound
is smaller than the one for the $h_i=0$ case.
\begin{thm}\label{thm:Barrier}
Given an induced subgraph $F\subseteq G$, let $\uh^F$ be defined by
$h^F_i=h_i+|\di|_{G\setminus F}$, where  $|\di|_{G\setminus F}$
is the degree of $i$ in $G\setminus F$.
For reversible asynchronous dynamics we have
$\tau_+(G;\uh) = \exp\{2\beta\Gamma_*(G;\uh)+o(\beta)\}$, where
\begin{eqnarray}
\Gamma_*(G;\uh) =\max_{F\subseteq G}\Gamma(F;\uh^F)=
\max_{F\subseteq G}\Delta(F;\uh^F)\, .\label{eq:Barrier}
\end{eqnarray}
\end{thm}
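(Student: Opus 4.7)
The plan is to bound $\tau_+(G;\uh)$ above and below by matching exponentials with rates $\max_{F}\Gamma(F;\uh^F)$ and $\max_{F}\Delta(F;\uh^F)$ respectively. The elementary inequality $\Delta(F;\uh^F)\le\Gamma(F;\uh^F)$ holds for every $F$, because any increasing path from $\emptyset$ to $F$ must cross every admissible boundary $\partial\Omega_F$; so matching asymptotics for $\tau_+$ will force $(\log\tau_+)/(2\beta)$ to converge to both suprema, proving \eqref{eq:Barrier} and the exponential asymptotics at once.

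\textbf{Upper bound via $\Gamma$.} I would fix a worst-case starting configuration, identify its ``$+1$'' set as $S_0\subseteq V$, and let $F=V\setminus S_0$ be the vertices that still need to be flipped. For any path $\emptyset=T_0,T_1,\ldots,T_k=F$ in $2^F$ obtained by single-vertex moves, the standard canonical-path bound for reversible chains (as in Jerrum--Sinclair and Diaconis--Saloff-Coste) gives $\tau_+\le\mathrm{poly}(n)\,\exp\{\beta\max_s[H(S_0\cup T_s)-H(S_0)]\}$. A direct computation from \eqref{eq:EnergyAsynchron} shows $\tfrac12[H(S_0\cup T_s)-H(S_0)]=\cut_F(T_s,F\setminus T_s)-|T_s|_{h^F}$, so minimising over paths in $2^F$ produces the exponent $2\beta\Gamma(F;\uh^F)$, and taking the sup over $\ux$ gives $2\beta\max_F\Gamma(F;\uh^F)$. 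The delicate step is restricting to \emph{monotone} orderings: given an arbitrary optimal path in $2^F$, submodularity of $S\mapsto\cut(S,V\setminus S)-|S|_h$ implies that each ``remove $i$, later re-add $i$'' block can be replaced by its monotone rearrangement without increasing the running maximum; iterating yields a monotone witness.

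\textbf{Lower bound via $\Delta$.} For the matching lower bound I would use the variational inequality
\begin{equation*}
\tau_+\;\ge\;c\cdot \frac{\mu_\beta(\Omega)}{\sum_{(S_1,S_2)\in\dOmega}\mu_\beta(S_1)K_\beta(S_1,S_2)},
\end{equation*}
valid for every monotone $\Omega\subseteq 2^V$ with $V\notin\Omega$. Reversibility and the Glauber form imply $\mu_\beta(S_1)K_\beta(S_1,S_2)\asymp\exp\{-\beta\max(H(S_1),H(S_2))\}$ up to polynomial factors, so the ratio collapses (at dominant boundary pairs) to $\exp\{2\beta\min_{(S_1,S_2)\in\dOmega}\max_i[\cut(S_i,V\setminus S_i)-|S_i|_h]\}$. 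Choosing $\Omega=\{S:\,S\cap F\in\Omega_F\}$ for a monotone $\Omega_F\subset 2^F$ concentrates the Gibbs weight on configurations with $V\setminus F$ all-$(+1)$ (these minimise $H$ within $\Omega$), and the energetics on $F$ become exactly those of the tilted problem on $F$ with field $\uh^F$. Optimising over $\Omega_F$ and $F$ yields the exponent $2\beta\max_F\Delta(F;\uh^F)$.

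\textbf{The main obstacle.} The hardest ingredient is showing that the lower bound via $\Delta$ is sharp: naive Cheeger-type bounds would lose a factor of $2$ in the exponent, so one cannot simply plug in a test set. The route, as announced in the introduction, is to prove that the principal Dirichlet eigenvector $\psi$ of the generator on $2^V\setminus\{V\}$ is itself monotone in the partial order on subsets. I would obtain this by a fixed-point / Perron-type argument on the cone of monotone functions, exploiting the fact that the Glauber kernel preserves this cone (an FKG/Holley property specific to Ising-type updates). Once $\psi$ is known to be monotone, the level-set approximation $\Omega_t=\{\psi\ge t\}$ produces a monotone test set whose characteristic function realises the eigenvalue up to a polynomial factor, matching the upper bound's exponent; combining this with the subgraph decomposition that inserts the effective field $\uh^F$ closes the argument.
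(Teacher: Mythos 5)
Your overall strategy tracks the paper's: pass from the hitting time to a path barrier via a Donsker--Varadhan / canonical-path argument (the paper's Lemma~\ref{lemma:BasicMarkovChain}), show the optimal path can be taken monotone to obtain the $\max_F\Gamma(F;\uh^F)$ characterization, and obtain the $\max_F\Delta(F;\uh^F)$ characterization by proving the principal Dirichlet eigenvector of the restricted kernel is monotone via a Brouwer/Perron fixed-point argument on the cone of monotone functions (preserved because Glauber dynamics on the Ising model is attractive), then taking a monotone level set. Your computation $\tfrac12[H(S_0\cup T)-H(S_0)]=\cut_F(T,F\setminus T)-|T|_{h^F}$ is also the right reduction to the effective field.

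The gap is in the claim that ``submodularity implies each `remove $i$, later re-add $i$' block can be replaced by its monotone rearrangement without increasing the running maximum.'' This is asserted, not proved, and it is not correct as stated. The paper's argument is substantially more delicate and uses three ingredients you omit: (i) the observation that a worst-case starting set $B=S_0$ satisfies $H(A)\ge H(B)$ for every $A\subset B$, which controls the marginal cost of re-adding vertices originally present in $S_0$; (ii) the choice, among all paths attaining the optimal barrier, of one that minimizes a lexicographic potential $f(\omega)=|\omega|^2|V|-\sum_{S_i\in\omega}|S_i|$, which rules out gratuitously long detours and fixes the position of the first backward step; and (iii) Lemma~\ref{lemma:TwoCases}, which partitions the removed block $S_k\setminus S_{k+1}$ into $R=(S_k\setminus S_{k+1})\cap S_0$ and $T=(S_k\setminus S_{k+1})\setminus S_0$ and uses supermodularity of the marginal $M(\cdot,i)$ to show that either some $T'\subseteq T$ can be dropped from the path outright, or $M(T\cup R,k)\ge 0$ so that $S_{k+1}$ can be replaced without raising the barrier. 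A naive ``rearrange into monotone order'' step can fail: e.g.\ if a vertex $a\notin S_0$ is added, removed and re-added, one needs $H(S\cup\{a\})\le\max$ for the late superset $S$, which via submodularity only reduces to $H(\{a\})-H(\emptyset)\le 0$ --- and this is not implied by worst-case optimality of $S_0=\emptyset$.

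One further framing point: the elementary $\Delta(F;\uh^F)\le\Gamma(F;\uh^F)$ plus bounds $\max_F\Delta\le(\log\tau_+)/(2\beta)\le\max_F\Gamma$ does not ``force'' the two suprema to coincide; you need the eigenvector argument to furnish a matching upper bound of the form $\Delta(F_0;\uh^{F_0})$ for the maximizing $F_0$, which then sandwiches all three quantities. You do address this in the ``main obstacle'' paragraph, so it is a presentational slip, not a mathematical one, but the opening ``proof plan'' paragraph as written suggests a non-argument.
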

Note that tilted cutwidth and tilted cut are dual quantities.
The former corresponds the maximal energy height along the lowest path
to the $+$ equilibrium. The latter is the lowest energy
along the highest separating set in the space of configurations.
A natural strategy for estimating $\Gamma_*(G;\uh)$
consists in lower bounding $\Delta(F;\uh^F)$ by exhibiting a
monotone set $\Omega\subseteq 2^{V(F)}$, and upper bounding
$\Gamma(F;\uh^F)$ by exhibiting a linear ordering of $V(F)$. The
above theorem shows that tilted cut and cutwidth coincide for the
`slowest' subgraph of $G$ and if the $h_i$'s are non-negative. The
hitting time is exponential in this graph parameter.


The two characterizations above are exact but it is highly non-trivial to
compute them. In the rest of this section, we will show how the
above theorem implies the known results for special classes of
graphs. Then, we relate tilted cutwidth to graph expansion and
derive a dichotomy between the hitting time on expanders versus
locally connected graphs. In the end, we show how to use algorithms
for sparsest cuts to find the approximately optimal linear ordering
as defined in tilted cutwidth.

The cases treated by Ellison are easily understood within the
present framework. In order to derive a lower bound for the complete
graph, with $h_i=h$ for all $i\in V$, one can restrict attention to
$F=G$ and for that graph define $\Omega$ to be the family of all
sets with cardinality at most $n/2$.
\begin{eqnarray}
 \Gamma^*(K_n; \uh) \geq \min_{|S| = n/2}   \left[\cut(S,V\setminus S)-|S|_h\right]
= (n-h)^2/4+O(n)\, .
\end{eqnarray}
The second example studied by Ellison is a $2k$-regular graph resulting
from connecting all vertices of distance at most $k$ in a cycle. In
that graph, the maximum is again achieved for $F=G$, and the natural
linear ordering of the cycle yields $\Gamma(G;h)\le 4k^2$.

It is also straightforward to recover the result of Young
\cite{youngdiffusion} from the above theorem.
Indeed, the hypotheses of \cite{youngdiffusion} are equivalent to
the existence of a sequence $S_1,\dots,S_T\subseteq V$ such that
$H(S_t) = \min_{S'\subseteq S_t}H(S')\le 0$ and $|S_i|\le k$.
By flipping vertices along this sequence and using the submodularity of
$H(\,\cdot\,)$, it follows that $\Gamma(F;\uh^F)\le k^2$.

%
%

\subsection{Relation to graph expansion}

The following Lemma links the isoperimetric function of $G$
(and its subgraphs) to the hitting time. It is particularly useful when
analyzing specific graph families.
\begin{lemma}\label{lemma:Isoperimetric}
For $\theta\in \reals$ define
$J(\theta) = [\theta-h_{\rm max},\theta+h_{\rm max}]$.
Assume that there exist constants $\alpha$ and
$\gamma<1$ such that
for any subset of vertices  $U\subseteq V$, and any $\theta$
such that there exists $S\subseteq U$ with $|S|_h\in J(\theta)$, we have
\begin{eqnarray}
\cut(S,U\setminus S) \le  \alpha\, |S|^{\gamma}\, ,\label{eq:IsoperimetricHypo}
\end{eqnarray}
for at least one such $S$. Then
$\Gamma_*(G;\uh)\le A(\alpha,\gamma,h_{\rm max})\, h_{\rm min}^{-1/(1-\gamma)}
\log\,  \max(2,h_{\rm min}^{-1})$.

Conversely, assume there exists   $U\subseteq V(G)$, such that
for $i\in U$, $|\partial i\cap (V\setminus U)|\le b$, and the subgraph induced
by $U$ is a $(\delta,\lambda)$ expander. Then
$\Gamma_*(G;\uh)\ge
(\lambda-h_{\rm max}-b)\lfloor \delta |U|\rfloor $.
\end{lemma}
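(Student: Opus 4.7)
The plan is to treat the two bounds separately, with the lower bound being a direct construction and the upper bound requiring a recursive greedy procedure.

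For the \emph{lower bound}, I would set $F = G[U]$, the induced subgraph on $U$. The hypothesis $|\partial i|_{G\setminus F}\le b$ for $i\in U$ gives $h^F_i\le h_{\max}+b$, and by Theorem~\ref{thm:Barrier}, $\Gamma_*(G;\uh)\ge \Delta(F;\uh^F)$. Take the monotone family $\Omega = \{S\subseteq V(F):\, |S|\le \lfloor\delta|U|\rfloor\}$. Any boundary pair $(S_1,S_2)=(S,S\cup\{i\})\in\partial\Omega$ has $|S_1|=\lfloor\delta|U|\rfloor$, which lies in the expansion regime. By $(\delta,\lambda)$-expansion, $\cut(S_1,V(F)\setminus S_1)\ge \lambda|S_1|$, while $|S_1|_{h^F}\le (h_{\max}+b)|S_1|$, so
\[
\max_{j=1,2}\bigl[\cut(S_j,V(F)\setminus S_j) - |S_j|_{h^F}\bigr] \;\ge\; (\lambda - h_{\max} - b)\lfloor\delta|U|\rfloor
\]
uniformly on $\partial\Omega$. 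Hence $\Delta(F;\uh^F)\ge (\lambda-h_{\max}-b)\lfloor\delta|U|\rfloor$, which is the claim.

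For the \emph{upper bound}, I would construct, for every induced subgraph $F\subseteq G$, a linear ordering of $V(F)$ realizing small tilted cutwidth. The procedure is greedy: starting from $S_0=\emptyset$, at stage $k$ set $U_k=V(F)\setminus S_k$ and apply the isoperimetric hypothesis at a value $\theta$ of order $W^*:=(\alpha/h_{\min})^{1/(1-\gamma)}$, the weight scale at which the trade-off $\alpha|B|^\gamma\le |B|_h$ becomes favorable (via $|B|\le |B|_h/h_{\min}$). This yields a block $B_k\subseteq U_k$ with $|B_k|_h\in J(\theta)$ and $\cut_G(B_k,U_k\setminus B_k)\le \alpha|B_k|^\gamma$. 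Append the vertices of $B_k$ next in the ordering. Using $\cut_F\le \cut_G$ on the same vertex sets, the identity $\cut(S_{k+1})=\cut(S_k)-e(S_k,B_k)+e(B_k,U_k\setminus B_k)$ gives
\[
\bigl[\cut(S_{k+1})-|S_{k+1}|_{h^F}\bigr] - \bigl[\cut(S_k)-|S_k|_{h^F}\bigr] \;\le\; \alpha|B_k|^\gamma - |B_k|_h \;\le\; 0,
\]
so the tilted cut is controlled at stage boundaries.

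The \emph{main obstacle} is controlling the tilted cut at intermediate prefixes of the ordering that lie strictly inside a block $B_k$, since $\Gamma$ takes the maximum over every prefix $S_t$, not just the block endpoints. This reduces to bounding the tilted cutwidth of $F[B_k]$, which I would do by recursing the same greedy construction inside $B_k$ (with $U=B_k$): at recursion depth $\ell$ the extracted sub-blocks have weight roughly $W^*/2^\ell$, and the cut contributions $\alpha|B|^\gamma$ form a convergent geometric series. The recursion terminates when sub-blocks have weight $\Theta(h_{\min})$, producing depth $O(\log h_{\min}^{-1})$. Summing along the recursion, together with the per-stage bound above, yields the claimed $A(\alpha,\gamma,h_{\max})\,h_{\min}^{-1/(1-\gamma)}\log\max(2,h_{\min}^{-1})$. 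Minor edge cases, such as residual vertices when $|U_k|_h<W^*$ and the need to pick $\theta$ for which some $S\subseteq U$ with $|S|_h\in J(\theta)$ exists (always feasible while $|U|_h\ge h_{\max}$), contribute only constants absorbed into $A$.
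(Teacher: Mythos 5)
Your lower bound is exactly the paper's argument: restrict to $F=G[U]$, observe $h^F_i\le h_{\max}+b$, apply Theorem~\ref{thm:Barrier} and the expansion at cardinality $\lfloor\delta|U|\rfloor$. (The paper writes the tilted cut as a minimum over $|S|=t$ after applying monotonicity of $\Delta$ in $\uh$; your explicit monotone family $\Omega=\{|S|\le\lfloor\delta|U|\rfloor\}$ is the same thing.)

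Your upper bound follows the same skeleton as the paper's but re-derives rather than invokes the two black boxes the paper uses. The paper (i) verifies the tilted-expansion hypothesis~(\ref{tiltedexpansion}) with $L_1\sim h_{\min}^{-\gamma/(1-\gamma)}$ and $L_2=L_1+2h_{\max}$, (ii) bounds the cutwidth of any $S$ with $|S|_h\le L_2$ by $\alpha|S|^\gamma\log|S|$ via the Leighton--Rao theorem~\cite{LR}, and (iii) feeds these into Theorem~\ref{thm:Crux}, which performs exactly the greedy block extraction you describe and shows that the tilted cut is controlled at all prefixes, not just block boundaries. Your greedy stage-by-stage construction and the per-stage estimate $\alpha|B_k|^\gamma-|B_k|_h\le 0$ reproduce steps~(i) and~(iii) inline, and your recursion inside each $B_k$ is essentially the recursive-bisection proof of the Leighton--Rao cutwidth bound, giving the same $\log h_{\min}^{-1}$ factor. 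So the route is the same; yours is more self-contained, the paper's is shorter because it cites Theorem~\ref{thm:Crux} and~\cite{LR}. Two small cautions: (a) the natural threshold weight at which $\alpha|S|^\gamma\le|S|_h$ becomes true is $W^*\sim\alpha^{1/(1-\gamma)}h_{\min}^{-\gamma/(1-\gamma)}$ (you wrote $(\alpha/h_{\min})^{1/(1-\gamma)}$, which is larger by a factor $h_{\min}^{-1}$; this does not break anything since the cutwidth term already carries a $h_{\min}^{-1/(1-\gamma)}$, but the exponents should be tracked carefully); (b) the recursion inside $B_k$ needs one more bookkeeping step: when you extract several sub-blocks $B'_1,B'_2,\dots$ at a given depth, the accumulated boundary $\sum_{i<j}\cut(B'_i,\cdot)$ contributes to the cut at prefixes inside $B'_j$, and you need to argue this accumulation stays geometric. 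The paper sidesteps this by using the Leighton--Rao statement directly, and Theorem~\ref{thm:Crux} handles the analogous accumulation at the top level via the telescoping identity for $\cut(\cup_{s\le t}R_s,\cdot)$. Your sketch names the right obstacle and the right shape of the fix, but as written the convergence of the accumulated-boundary series is asserted rather than shown.
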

In words, the hitting time is dominated by highly connected
subgraphs of $G$, that are loosely tied to the rest of the graph.
On the other hand,  an upper bound on the isoperimetric function leads to
upper bounds on the hitting time.

In order to gain some intuition we consider a few interesting graph models:
\begin{enumerate}
\item \emph{Finite-range $d$-dimensional networks.}
The graph $G$ is a $d$-dimensional range-$K$ network if we can
associate to each of its vertices $i\in V$ a position
$x_i\in\reals^d$ such that, $(1)$ whenever $(i,j)\in E$,
$d_{\Eu}(x_i,x_j)\le K$ (here $d_{\Eu}(\,\cdots\,)$ denotes
Euclidean distance); $(2)$ Any cube of volume $v$ contains at most
$2\,v$ vertices. We will also say that $G$ is \emph{embeddable} in
this case.
\item \emph{Small world networks.} Again, the vertices are those of a
$d$-dimensional grid of side $n^{1/d}$. Two vertices $i$, $j$ are connected
by an edge if they are nearest neighbors.
Further, each vertex $i$ is connected to $k$ other vertices $j(1)$,
$\dots$, $j(k)$ drawn independently with distribution
$P_i(j) = C(n) |i-j|^{-r}$.
\item \emph{Random regular graphs of degree $k$}.
\end{enumerate}

\begin{thm}\label{thm:Examples}
The following statements hold with high probability:

If $G$ is a $d$-dimensional finite-range graph, and $h_{\rm min}>0$,
then $\Gamma_*(G;\uh) = O(1)$.

If $G$ is a small world network with $r\ge d$, and
$h_{\rm max}\le k- d-5/2$, then
$\Gamma_*(G;\uh) = \Omega(\log n/\log\log n)$.

If $G$ is a small world network with $r< d$, and $h_{\rm max}$ is small enough,
then $\Gamma_*(G;\uh) =\Omega(n)$.

If $G$ is a random $k$-regular graph, and $h_{\rm max}<k-2$,
then $\Gamma_*(G;\uh) =\Omega(n)$.
\end{thm}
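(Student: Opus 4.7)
The four claims are proved case by case by applying Lemma~\ref{lemma:Isoperimetric}: the upper bound yields the finite-range statement, while the other three follow from the lower bound via different choices of the set $U$. Throughout ``w.h.p.'' means ``with probability tending to $1$ as $n\to\infty$''.

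For the finite-range case, the plan is to verify hypothesis~\eqref{eq:IsoperimetricHypo} with exponent $\gamma=(d-1)/d$ and some constant $\alpha=\alpha(K,d)$. Given $U\subseteq V$ and an admissible threshold $\theta$, I would grow a family of axis-aligned Euclidean cubes around a vertex of $U$ and pick the cube whose intersection $S$ with $U$ satisfies $|S|_h\in J(\theta)$; this is possible because, by the density bound in the embedding, adding one slab contributes $O(h_{\max})$ to $|S|_h$. The Euclidean isoperimetric inequality, combined with the length bound $K$ on edges, then gives $\cut(S,U\setminus S)=O(|S|^{(d-1)/d})$. Plugging $\gamma=(d-1)/d$ together with the fixed $h_{\min}>0$ into Lemma~\ref{lemma:Isoperimetric} yields $\Gamma_*(G;\uh)=O(1)$.

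For the random regular and the small-world $r<d$ cases I would take $U=V$, so $b=0$, and show that $G$ itself is a $(\delta,\lambda)$-expander with constant $\lambda>h_{\max}$, which delivers $\Gamma_*(G;\uh)=\Omega(n)$ directly from the lemma. For random $k$-regular graphs this is classical: a configuration-model union-bound calculation shows that for every $\epsilon>0$ there is $\delta>0$ such that w.h.p.\ every $S$ with $|S|\le\delta n$ satisfies $|E(S,V\setminus S)|\ge(k-2-\epsilon)|S|$, and the hypothesis $h_{\max}<k-2$ lets us pick $\epsilon$ so that $\lambda-h_{\max}>0$. For small world with $r<d$, the key estimate is that the normalizing constant $C(n)=\Theta(n^{-(d-r)/d})$ makes the long-range kernel spread out, so $P_i(S)=O((|S|/n)^{(d-r)/d})$ uniformly in $i\in S$ (worst case: $S$ is a ball around $i$). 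Hence for $|S|\le\delta n$ with $\delta$ small, each $i\in S$ sends at least $k/2$ long-range edges into $V\setminus S$ in expectation; Chernoff bounds together with a union bound over the $\le\binom{n}{s}$ subsets at each size $s\le\delta n$ give uniform expansion with $\lambda$ a positive constant, and the lemma then produces $\Omega(n)$ whenever $h_{\max}<\lambda$.

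The small-world $r\ge d$ case is the delicate one and is the source of the $\log n/\log\log n$ rate. When $r\ge d$ the long-range kernel is concentrated on short scales, so neither $U=V$ nor a typical large region is a constant-expansion expander with small external boundary; Lemma~\ref{lemma:Isoperimetric} must therefore be applied to a sub-linear $U$. I would take $U$ to be a rectangular box of $d$-dimensional volume $\Theta(\log n/\log\log n)$ and check, by Chernoff plus a union bound over the $\Theta(n/|U|)$ disjoint candidate box positions, that w.h.p.\ some such box has (a) induced edge expansion $\lambda$ bounded below by a constant coming from the short-range grid structure together with the positive mass that $P_i$ places inside $U$ when $r\ge d$, and (b) at most $b$ external neighbors per vertex (essentially $k$ plus surface corrections), with $\lambda-h_{\max}-b>0$ enforced by the hypothesis $h_{\max}\le k-d-5/2$. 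The main obstacle is calibrating $|U|$ against the expander parameters so that the slack $\lambda-h_{\max}-b$ remains positive while $\delta|U|=\Omega(\log n/\log\log n)$; the explicit constant $5/2$ should emerge from the union-bound overhead and the surface-to-volume ratio of the box.
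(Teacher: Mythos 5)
Your plan is sound for three of the four claims, but the $r\ge d$ small-world case contains a genuine gap, and it is precisely the case that produces the most delicate rate.

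For the finite-range case, your route (verify \eqref{eq:IsoperimetricHypo} with $\gamma=(d-1)/d$ by slicing with cubes, then invoke the upper bound of Lemma~\ref{lemma:Isoperimetric}) is essentially the paper's route, though the paper phrases it slightly differently: rather than verifying the abstract isoperimetric hypothesis for every $U$, the paper constructs a single tent-shaped test function $\varphi$ over a heaviest cube $\cC_0$ and feeds it into the Cheeger-type Lemma~\ref{lemma:Cheeger} to exhibit one bounded-size set $S$ with $\cut(S,V\setminus S)\le |S|_{h/4}$, then invokes Theorem~\ref{thm:Crux} directly. Both ways work; the paper's form avoids worrying about arbitrary $U$ at the cost of one more lemma. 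For random $k$-regular graphs and small worlds with $r<d$, your take $U=V$, $b=0$, and argue expansion is exactly the paper's approach; the paper just outsources the expansion to \cite{ExpanderExist} and \cite{Abie} rather than redoing the configuration-model / Chernoff-plus-union-bound calculations you sketch, and your estimate $C(n)=\Theta(n^{-(d-r)/d})$ and the resulting $P_i(S)=O((|S|/n)^{(d-r)/d})$ are the right quantitative statements.

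The $r\ge d$ case is where your mechanism goes wrong. You write that the induced edge expansion of a box $U$ of volume $\Theta(\log n/\log\log n)$ is ``bounded below by a constant coming from the short-range grid structure together with the positive mass that $P_i$ places inside $U$,'' and that the external degree $b$ is ``essentially $k$ plus surface corrections.'' Neither holds for a \emph{typical} such box: the grid edges on a sub-linear box give expansion $\Theta(|U|^{-1/d})\to 0$, not a constant, and when $r\ge d$ a vertex in a box of polylogarithmic volume sends essentially all $k$ of its long-range edges outside, so $b\approx k$ while the internal contribution of long-range edges to $\lambda$ is negligible. Thus $\lambda-h_{\max}-b<0$ for a typical box and the lemma gives nothing. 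The paper's actual argument is a rare-event / second-moment construction: fix in advance a $k$-regular $(\ve,k-5/2)$ expander $G_U$ on the vertex set of a cube $U$, let $A_U$ be the (very unlikely) event that the long-range edges inside $U$ realize exactly $G_U$ \emph{and} no long-range edge from $V\setminus U$ lands in $U$; lower-bound $\prob(A_U)\gtrsim (C(n)e^{-3}\ell^{-r})^{k\ell^d}$, choose $\ell^d=\Theta(\log n/\log\log n)$ so that $\E[N_S]=\Omega(n^a)$ over $n/\ell^d$ disjoint boxes, and finish with Chebyshev (pairwise events for disjoint boxes are negatively correlated). Under $A_U$, the short-range grid contributes $b=d$ external edges per vertex, not $k$, and the engineered long-range expander supplies $\lambda=k-5/2$, so the hypothesis $h_{\max}\le k-d-5/2$ makes $\lambda-h_{\max}-b>0$. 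The constant $5/2$ thus comes from the expansion parameter of the chosen expander family (Kahale's result), not from ``union-bound overhead and surface-to-volume ratio'' as you speculate. Without this rare-event structure the argument does not close.
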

These qualitatively distinct behaviors correspond to different
mechanisms by which consensus spreads in these networks.
In finite-range networks, the process is initiated in a relatively compact
region taking value $+1$. If this is large enough
(which happens with positive probability), it spreads through the
whole graph.
This is possible because of the bias provided by $h_{\rm min}>0$.
Indeed the proof of this statement implies
an upper bound of the form $\Gamma(G;\uh)=
O(h_{\rm min}^{-(d-1)}\log(1/h_{\rm min}))$.

In small-world networks with $r\ge d$
the process is similar, but the spread
of $+1$'s is blocked in its very last stages by small, highly connected
regions of  size roughly $(\log n)$.
Finally,  small-world networks with $r< d$ and
random regular graphs are expanders and  convergence is extremely slow.

All the above statements take the form of a tradeoff
between how `well-connected' is $G$ and how biased is the dynamics
(the latter being measured by $h_{\rm min}$).
In the case of well-connected graphs it is not hard to prove
upper bounds on $\Gamma_*(G;\uh)$ for large enough $\uh$.
For instance, in the case of $k$-regular graphs $\Gamma_*(G;\uh)=O(1)$
if $h_{\rm min}\ge k$.

\subsection{Approximating tilted cut and tilted cutwidth}
%

The maximization over $\Omega$ in Eq.~(\ref{eq:TiltedCutDef}) for
computing tilted cut is highly non-trivial. Here we obtain a class
of lower bounds by restricting $\Omega$ to essentially subsets with
a given cardinality. The following result shows the `loss'
resulting from this restriction is bounded, under appropriate conditions.
On the other hand, it
implies that algorithms for computing sparse cuts find approximately
optimal orderings corresponding to a tilted cutwidth.

\begin{thm}\label{thm:Crux}
Assume that, for some $L_1, L_2$, with $L_2 \ge h_{\rm max}$ and for
every induced  subgraph $F\subseteq G$, we have
\begin{equation}
\label{tiltedexpansion}  \min_{|S|_{h}\in [L_1,L_2]}\, \left[
\cut(S, V(F)\setminus S) -|S|_{h^F}  \right]\le L_1\, ,
\end{equation}
where it is understood that $\emptyset\neq S\subseteq V(F)$. If, for
every subset of vertices $U$, with $|U|_h\le L_2$, the induced
subgraph has cutwidth upper bounded by $C$, then $\Gamma(G;4\uh) \le
C+L_1+L_2$.
\end{thm}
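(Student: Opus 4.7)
The plan is to exhibit an explicit linear ordering of $V$ via a greedy block decomposition and bound the tilted energy $\phi(S_t):=\cut(S_t,V\setminus S_t)-4|S_t|_h$ at every prefix $S_t$ by $C+L_1+L_2$. Set $S_0=\emptyset$ and $F_0=G$. At each stage $j\ge 1$, as long as $|V(F_{j-1})|_h\ge L_1$, apply \eqref{tiltedexpansion} to the induced subgraph $F_{j-1}$ to obtain a block $W_j\subseteq V(F_{j-1})$ with $|W_j|_h\in[L_1,L_2]$ and
$$\cut_{F_{j-1}}(W_j,V(F_{j-1})\setminus W_j)-|W_j|_{h^{F_{j-1}}}\le L_1;$$
list the vertices of $W_j$ according to the cutwidth-$\le C$ ordering of the induced subgraph on $W_j$ (available since $|W_j|_h\le L_2$), then set $S_j=S_{j-1}\cup W_j$ and $F_j=F_{j-1}\setminus W_j$. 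When at some stage $|V(F_{j^*})|_h<L_1$, the remaining stub has total weight $\le L_2$ and hence admits a cutwidth-$\le C$ ordering, which is appended last.

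\textbf{Bound at block boundaries.} Using $\cut(S_j,\cdot)=\cut(S_{j-1},\cdot)-|E(S_{j-1},W_j)|+\cut_{F_{j-1}}(W_j,V(F_{j-1})\setminus W_j)$ together with $|W_j|_{h^{F_{j-1}}}=|W_j|_h+|E(W_j,S_{j-1})|$, the boundary term cancels and the tilted-cut inequality yields $\cut(S_j,\cdot)\le \cut(S_{j-1},\cdot)+L_1+|W_j|_h$. Telescoping and noting $|S_j|_h\ge jL_1$ gives $\cut(S_j,\cdot)\le jL_1+|S_j|_h\le 2|S_j|_h$, so $\phi(S_j)\le -2|S_j|_h$ at every block boundary.

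\textbf{Bound inside a block.} For $T\subsetneq W_{j+1}$ and $S'=S_j\cup T$,
$$\cut(S',\cdot)=\cut(S_j,\cdot)-|E(S_j,T)|+|E(T,W_{j+1}\setminus T)|+|E(T,V(F_j)\setminus W_{j+1})|.$$
The internal term is $\le C$ (cutwidth ordering of $W_{j+1}$), and the external term is bounded by $\cut_{F_j}(W_{j+1},V(F_j)\setminus W_{j+1})\le L_1+|W_{j+1}|_h+|E(W_{j+1},S_j)|$ (tilted cut on $W_{j+1}$). After collecting terms and using $|W_{j+1}|_h\le L_2$,
$$\phi(S')\le \phi(S_j)+C+L_1+L_2+|E(W_{j+1}\setminus T,S_j)|-4|T|_h.$$
The factor $4$ is essential precisely here: one bounds $|E(W_{j+1}\setminus T,S_j)|\le |E(W_{j+1},S_j)|\le \cut(S_j,\cdot)\le 2|S_j|_h$, and the slack $-\phi(S_j)\ge 2|S_j|_h$ carried over from the boundary cancels it, leaving $\phi(S')\le C+L_1+L_2-4|T|_h\le C+L_1+L_2$. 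An analogous, simpler calculation controls $\phi$ during the terminal stub.

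The main delicate point is the energy accounting in the last step: the worst-case back-edges $|E(W_{j+1},S_j)|$ from a new block to the already-processed vertices can be as large as $2|S_j|_h$, and this is exactly what the extra $2|S_j|_h$ of slack produced by replacing $\uh$ with $4\uh$ is designed to absorb. Sharpening the multiplier below $4$ would require ordering $W_{j+1}$ so as to simultaneously minimize its internal cutwidth and the back-edges to $S_j$, a refinement not directly available from the given hypotheses.
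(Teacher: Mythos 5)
Your proof is correct and follows essentially the same strategy as the paper's: a greedy block decomposition using \eqref{tiltedexpansion}, a cutwidth-achieving ordering within each block, concatenation, and an edge-accounting argument that shows the factor $4$ on $\uh$ is exactly what is needed to absorb back-edges to the already-processed prefix. The only difference is presentational: you track the tilted potential $\phi(S)=\cut(S,\cdot)-4|S|_h$ directly, while the paper first proves $\cut(\cup_{s\le t}R_s,\cdot)\le |\cup_{s\le t}R_s|_{2h}$ and subtracts $4|S|_h$ only at the end; the two bookkeepings are equivalent.
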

%


It is interesting to compare this result with the analysis of
contagion models \cite{Morris}. In that case contagion takes place
if there exists an ordering of the vertices $i(1)$, $i(2)$, \dots
such that, assuming $x_{i(1)}=+1$, $x_{i(2)}=+1$,\dots
$x_{i(t)}=+1$, the best response for $i(t+1)$ is strategy $+1$.
Theorem \ref{thm:Crux} allows to replace single vertices, by
`blocks' as long as they have bounded size and bounded cutwidth.

Assuming that a `good' path to consensus exists, can it be found
efficiently? By using a simple generalization of Feige and
Krauthgamer's \cite{FK} $O(\log^2 n)$ approximation algorithm for
finding the sparsest cut of a given cardinality, we have the
following
\begin{remark}\label{re:Coro}
If $G=(V,E)$ satisfies equation (\ref{tiltedexpansion}), it is
possible to find an ordering $i_1, i_2, \ldots, i_{n}$ of $V$ in
polynomial time so that for every $S_t = \{i_1, i_2, \ldots i_t\}$,
and $L=L_1+L_2+C$
$$\cut(S_t,V\setminus S_t) = O( |S_t|_h \log^2 n + L \log n).$$
\end{remark}

%
%
\subsection{Nonreversible and synchronous dynamics}

In this section we consider a general class of Markov dynamics over
$\ux\in \{+1,-1\}^V$. An element in this class is specified by
$p_{i,\beta}(y_i|\ux_{\di})$, with $p_{i,\beta}(+1|\ux_{\di})$ a
non-decreasing function of the number $\sum_{j\in \di}x_j$. Further
we assume that $p_i(+1|\ux_{\di}) \le e^{-2\beta}$ when
$h_i+\sum_{j\in \di}x_j<0$. Note that the synchronous Markov chain
studied in KMR \cite{KMR} and Ellison \cite{Ellison} is a special
case in this class.

Denote the hitting time of all (+1)-configuration in graph $G$ with
$\tau_+(G)$ as before.

\begin{propo}
Let $G(V,E)$ be a $k$-regular graph of size $n$ such that for
$\lambda, \delta > 0$, every $S \subset V, |S| \leq \delta n$ has
vertex expansion at least $\lambda$. Then for any noisy-best
response dynamics defined above, there exists a constant
$c=c(\lambda,\delta,k)$ such that $\tau_+(G;\uh)\ge \exp\{\beta c
 n\}$ as long as

$$ \lambda > \frac{3 k}{4} + \frac{ \max_i h_i}{2}. $$

\end{propo}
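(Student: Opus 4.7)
The approach is a Freidlin--Wentzell style bottleneck argument. Start the chain from the worst-case configuration $\um$ and exhibit a barrier in configuration space whose crossing to reach $\up$ forces any trajectory to incur $\Omega(n)$ ``noisy'' updates, that is, flips that go against the current best response. By the hypothesis $p_{i,\beta}(+1\mid \ux_{\di})\le e^{-2\beta}$ whenever $h_i+\sum_{j\in \di}x_j<0$, each such flip contributes a factor $\le e^{-2\beta}$ to the trajectory probability. A union bound over sample paths of bounded length then yields $\tau_+(G;\uh)\ge \exp\{\beta c n\}$.

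\textbf{Expansion input.} Take the bottleneck $\Omega=\{\ux:|S(\ux)|\le \delta n\}$, which contains $\um$ but not $\up$, where $S(\ux)=\{i:x_i=+1\}$. For $\ux\in\Omega$ with $|S|=m$, write $n_+(i)$ for the number of $+1$-neighbors of $i$. Since $G$ is $k$-regular, $e(S,V\setminus S)=km-2|E(S)|$, and vertex expansion $|N(S)\setminus S|\ge \lambda m$ gives $e(S,V\setminus S)\ge \lambda m$, hence $\sum_{i\in S}n_+(i)\le (k-\lambda)m$. Define the \emph{primed} set $P(S)=\{j\notin S:n_+(j)\ge (k-h_j)/2\}$ (members want to flip to $+1$) and the \emph{lonely} set $L(S)=\{i\in S:n_+(i)<(k-h_i)/2\}$ (members want to flip to $-1$). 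Each $j\in P(S)$ contributes at least $(k-h_{\max})/2$ edges to $S$, so $|P(S)|\le 2km/(k-h_{\max})$; splitting $\sum_{i\in S} n_+(i)$ between $L(S)$ and $S\setminus L(S)$ and using $n_+(i)\ge (k-h_{\max})/2$ on the second piece gives
\[
|L(S)|\ \ge\ \frac{2\lambda-k-h_{\max}}{k-h_{\max}}\,m,
\]
which is at least $m/2$ precisely when $\lambda>\tfrac{3k}{4}+\tfrac{h_{\max}}{2}$.

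\textbf{From counts to hitting time, and main obstacle.} Classify each single-site flip along a trajectory as \emph{cheap} (coincides with the then-current best response) or \emph{bad} (opposite); only bad flips pay $e^{-2\beta}$. The remaining task is to show that any trajectory from $\um$ to $\up$ contains at least $c n$ bad flips. A cheap $-1\to +1$ flip can occur only at a primed vertex; the bound $|P|\le 2km/(k-h_{\max})$ together with $|L|\ge m/2$ (valid throughout $m\le \delta n$) suggests a matching in which each cheap up-flip adding $j$ to $S$ creates or preserves an element of $L(S\cup\{j\})$, so with positive probability it is undone by a compensating cheap down-flip. Granted this, the net growth of $|S|$ is $O(N_{\text{bad}})$, so reaching $|S|>\delta n$ forces $N_{\text{bad}}\ge c n$. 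Each such trajectory has probability $\le e^{-2\beta c n}$, and a union bound over sample paths of length $T\le \exp\{\beta c n/2\}$ (with at most $n^T$ choices for the sequence of updated sites) gives $\P^{\um}[T_+\le T]\le e^{-1}$. The main obstacle is this matching step: since $|P|$ and $|L|$ can both be $\Theta(m)$ with comparable constants, a naive pointwise drift comparison is inconclusive, and one must instead introduce a Lyapunov function combining $|S|$ with a discordant-edge count, or construct a coupling with a reference chain on $(|P|,|L|)$. The construction should be uniform across asynchronous Glauber and synchronous parallel updates, which is what the monotonicity of $p_{i,\beta}(+1\mid\cdot)$ in $\sum_{j\in \di}x_j$ makes possible.
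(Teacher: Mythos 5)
Your expansion estimates on $|P(S)|$ and $|L(S)|$ are correct (minor slip: $|L|\ge m/2$ follows already from $\lambda\ge 3k/4+h_{\max}/4$, not ``precisely when $\lambda>3k/4+h_{\max}/2$''), and you correctly identify that the naive drift comparison is inconclusive -- indeed one can check that with these bounds $|P|\le |L|$ is impossible for \emph{any} $\lambda<k$, so no amount of tightening that comparison will close the gap. The Lyapunov function you gesture at should be $\Phi(S)=\cut(S,V\setminus S)-|S|_h$: a flip of a site $i$ changes $\Phi$ by exactly $-(h_i+\sum_{j\in\di}x_j)x_i^{\rm new}$, so $\Delta\Phi>0$ if and only if the flip is against the best response; moreover vertex expansion gives $\Phi(S)\ge(\lambda-h_{\max})|S|$ whenever $|S|\le\delta n$, and each bad flip raises $\Phi$ by at most $k+h_{\max}$. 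Passing to the monotone closure of the trajectory (perform only the up-flips, in order; this can only turn bad up-flips into good ones and keeps the final set at least as large) handles the fact that the stated hypothesis only suppresses up-flips against a negative field, and yields that any trajectory reaching $|S|\ge\delta n$ contains at least $(\lambda-h_{\max})\delta n/(k+h_{\max})$ suppressed up-flips. This is the missing matching step. The paper's own argument (a one-line sketch invoking a one-dimensional projection on $|S|$) is too terse to compare in detail, but the above shows the claimed one-dimensional drift on $|S|$ alone cannot be where the inequality comes from; it has to go through $\Phi$.

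The second gap, which you do not flag, is the final union bound, and it is the more serious one. With $n^T$ (in fact $(2n)^T$) site-and-flip sequences of length $T$, each of probability at most $n^{-T}e^{-2\beta cn}$, the bound reads $\P^{\um}[T_+\le T]\le 2^T e^{-2\beta cn}$, which is astronomically larger than $1$ at $T=\exp\{\beta cn/2\}$. The corrected form of this argument -- stochastic domination of the number of bad flips by a $\mathrm{Bin}(T,e^{-2\beta})$ variable -- only yields $T\lesssim cn\,e^{2\beta}$, i.e.\ a lower bound of order $n\,e^{2\beta}$ on the hitting time, not $\exp\{\beta cn\}$. To get the exponent proportional to $\beta n$ one needs a renewal/excursion argument: decompose time into excursions away from $\emptyset$, and show that each excursion reaches $\Phi\ge(\lambda-h_{\max})\delta n/2$ with probability at most $e^{-c'\beta n}$. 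This is where the stronger expansion hypothesis is actually used: it guarantees $|L(S)|\ge 1$ whenever $1\le|S|\le\delta n$, hence a best-response move decreasing $\Phi$ is available with probability $\Omega(1/n)$ at every step, and one can verify that $\exp\{\alpha\Phi(S_t)\}$ with $\alpha=\Theta(\beta)$ is a supermartingale up to the barrier. Optional stopping then gives the per-excursion probability $e^{-\Omega(\beta n)}$, and summing over geometrically many excursions gives $\tau_+\ge e^{\beta c n}$. Without this restart structure the probability that $\Omega(n)$ rare flips occur ``consecutively'' (before $\Phi$ relaxes back to zero) is not captured, and the claimed bound does not follow.
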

%
%
Note that random regular graphs satisfy the condition of the above
proposition as long as $h_i$'s are small enough.
The proof of the proposition is by simply considering the evolution
of one dimensional chain indicating the number of $+1$ vertices.

\begin{propo}
Let $G$ be a $d$-dimensional grid of size $n$ and constant $d \geq
1$. For any synchronous or asynchronous noisy-best response dynamics
defined above, there exists constant $c$ such that $\tau_+(G;\uh)\le
\exp\{\beta c\}$.
\end{propo}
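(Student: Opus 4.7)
The plan is to prove the upper bound by a nucleation-plus-growth argument, mirroring the strategy that underlies the finite-range case of Theorem~\ref{thm:Examples} but carried out directly on the chain instead of via the reversible characterization of Theorem~\ref{thm:Barrier} (which is unavailable here). Since $d$ is constant and $h_{\min}>0$, classical droplet constructions provide a finite-size subset $D\subseteq V$ of cardinality $|D|=O(1)$, depending only on $d$ and $h_{\min}$, which is a \emph{critical droplet}: starting from the configuration that is $+1$ on $D$ and $-1$ elsewhere, the zero-noise best-response dynamics grows the $+1$ region monotonically until it covers $V$, and every site $i$ added along the way has local field $h_i+\sum_{j\in \di}x_j>0$ at the moment of its flip.

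From an arbitrary initial configuration I would then bound the hitting time by composing a nucleation phase and a growth phase. In the nucleation phase, over a constant time window $T_0$, I fix any subcube $W\subseteq V$ large enough to contain a translate of $D$ and consider the event that a specific sequence of at most $|D|$ against-field flips inside $W$ produces a copy of the critical droplet. Each individual flip has probability at least $e^{-c_0\beta}$, using the \emph{noisy} character of the dynamics, so (by conditional independence of updates in the synchronous case, and of Poisson clocks in the asynchronous case) the droplet is nucleated with probability at least $e^{-c_1\beta}$ for a constant $c_1=c_1(d,h_{\max})$. In the growth phase, a monotone coupling with the noiseless best-response dynamics shows that the $+1$ region then expands at positive speed --- every favourable flip along the wavefront succeeds with probability $1-O(e^{-2\beta})$ --- and engulfs $V$ in polynomial-in-$n$ time with probability approaching $1$ as $\beta\to\infty$. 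Chaining independent nucleation attempts over disjoint windows gives $\tau_+(G;\uh)\le e^{c_1\beta}\cdot\mathrm{poly}(n)\le e^{c\beta}$ for $\beta$ large enough, with $c$ a constant depending on $d$, $h_{\min}$, $h_{\max}$.

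The main obstacle is reconciling the one-sided hypothesis $p_{i,\beta}(+1|\ux_{\di})\le e^{-2\beta}$ with the \emph{lower} bound $p_{i,\beta}(+1|\ux_{\di})\ge e^{-c_0\beta}$ that the nucleation step requires; this complementary bound is the natural formalization of noisiness for best-response dynamics and holds for the Glauber, Ellison and KMR kernels named in the paper. A secondary point is that under synchronous updates all sites flip at once, so a freshly nucleated droplet could be eroded in the very same step --- but this is controlled since $|\partial D|=O(1)$ and the erosion probability $1-(1-e^{-2\beta})^{|\partial D|}$ is dominated by the nucleation probability $e^{-c_1\beta}$ for $\beta$ large.
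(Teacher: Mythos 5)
The paper offers no proof of this proposition --- it states only that the result ``can be proved by a simple coupling argument very similar to that of Young'' and defers the details to a fuller version --- so there is no argument in the text to compare against. Your nucleation-plus-growth strategy is the natural reading of that remark and is consistent in spirit with the droplet analyses cited elsewhere in the paper. Moreover, the reservation you raise is a real one: the class of dynamics defined in this section imposes only the one-sided bound $p_{i,\beta}(+1\,|\,\ux_{\di})\le e^{-2\beta}$ when $h_i+\sum_{j\in\di}x_j<0$, which controls the \emph{lower} bound for expanders in the preceding proposition but gives nothing for the present upper bound. Without a complementary lower bound $p_{i,\beta}(+1\,|\,\cdot\,)\ge e^{-c_0\beta}$ on against-field flips, the proposition is false as stated (take $p_{i,\beta}(+1\,|\,\cdot\,)=0$ under negative field; then $\um$ is absorbing). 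A symmetric omission affects your growth phase: nothing in the stated hypotheses forces $p_{i,\beta}(+1\,|\,\cdot\,)$ to approach $1$ when the field is positive, yet you need this for the wavefront to advance reliably. Both are satisfied by the Glauber, Ellison and KMR kernels; the missing ingredient is a two-sided definition of ``noisy,'' which is really a gap in the paper's hypotheses that you have correctly located.

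There is, however, a genuine error in your proof. You assert the existence of a finite droplet $D$ such that ``the zero-noise best-response dynamics grows the $+1$ region monotonically until it covers $V$, and every site $i$ added along the way has local field $h_i+\sum_{j\in\di}x_j>0$ at the moment of its flip.'' For $d\ge2$ and $h_{\min}$ small this is impossible, and the paper says so explicitly in the introduction: under noise-free best response on the grid, ``no finite set of $+1$'s can influence the whole grid, and in fact the influence of any finite set of $+1$'s is limited to the smallest rectangle that circumscribes them.'' A $+1$ rectangle is a fixed point of the noiseless dynamics when $h<2$, since boundary sites see only one $+1$ neighbor. What is true --- and this is the Neves--Schonmann mechanism the paper cites --- is that once the rectangle exceeds a critical size, adding a new layer requires only a single against-field ``seed'' flip, after which the row fills in deterministically; the cost is therefore $e^{-c_0\beta}$ \emph{per layer}, repeated $O(n^{1/d})$ times, not a one-time nucleation cost as your argument has it. The conclusion $\tau_+\le e^{c\beta}$ survives, since for fixed $n$ as $\beta\to\infty$ the $\mathrm{poly}(n)$ factor is subsumed, but the accounting must reflect repeated seeding rather than supercritical free growth. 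For $d=1$ (the only case in which your self-growth claim is literally correct) a single $+1$ vertex with $h>0$ does propagate across the line under noiseless dynamics, and your argument goes through as written.
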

The above proposition can be proved by a simple coupling argument
very similar to that of Young \cite{Young}. We will leave its
details to a more complete version of the paper. The above two
propositions show that for a large class of noisy best-response
dynamics including the one considered in \cite{Ellison}, the degrees
of vertices are not the key property dictating the rate of
convergence.

%
%
%
\section{Proofs}

\subsection{Theorem \ref{thm:Barrier}}

It is a basic result in the theory of reversible Markov chains with
exponentially small transition rates, that hitting time are related to
`energy barriers.'
\begin{lemma}\label{lemma:BasicMarkovChain}
Consider a Markov chain with state space $\cS$ reversible with respect to the
stationary measure $\mu_{\beta}(x) = \exp(-\beta H(x)+o(\beta))$,
and assume that, if $p_{\beta}(x,y)= \exp(-\beta V(x,y)+o(\beta))$.

Let $A= \{x:\, H(x)\le H_0\}$ be non-empty, and define the typical
hitting time for $A$ as in Eq.~(\ref{eq:TypicalHitting}), with $+$
replaced by $A$. Then $\tau_A= \exp\{\beta\tG_A+o(\beta)\}$
where
\begin{eqnarray}
\tG_A = \max_{z\not\in A} \min_{\omega:z\to A}\max_{t\le |\omega|-1}
\left[H(\omega_t)+V(\omega_{t},\omega_{t+1})
-H(z)\right]\,  \, ,\label{eq:BasicMarkovChain}
\end{eqnarray}
and the $\min$ runs over paths $\omega = (\omega_1,\omega_2,\dots,\omega_T)$
in configuration space such that $p_{\beta}(\omega_t,\omega_{t+1})>0$
for each $t$.
\end{lemma}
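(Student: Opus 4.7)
The plan is to prove matching upper and lower bounds on $\tau_A$ through a classical Freidlin-Wentzell / metastability argument based on reversibility and the capacity method. The starting observation is that detailed balance $\mu_\beta(x)p_\beta(x,y)=\mu_\beta(y)p_\beta(y,x)$ forces, at the exponential scale, the symmetry $H(x)+V(x,y)=H(y)+V(y,x)$, so each edge $(x,y)$ of the state space carries a well-defined symmetric cost $U(x,y):=H(x)+V(x,y)$, with $\mu_\beta(x)p_\beta(x,y)=\exp\{-\beta U(x,y)+o(\beta)\}$. Writing $\tG_A(z):=\min_\omega\max_t[U(\omega_t,\omega_{t+1})-H(z)]$, the quantity $\tG_A$ is then the worst-case saddle height that must be crossed to reach $A$.

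\textbf{Upper bound.} For each $z\notin A$ fix a path $\omega^*=(z,\omega_1,\ldots,\omega_T)$ with $\omega_T\in A$ achieving $\tG_A(z)$. I would invoke the standard capacity / mean-hitting-time identity for reversible chains, $E_z[T_A]\le \mu_\beta(\cS)/(\mu_\beta(z)\,\mathrm{cap}(z,A))$, and a Thomson unit-flow argument along $\omega^*$ to lower bound $\mathrm{cap}(z,A)\ge T^{-1}\min_t\mu_\beta(\omega_t)p_\beta(\omega_t,\omega_{t+1})$. Unpacking the exponential asymptotics,
\begin{equation*}
E_z[T_A]\;\le\;|\cS|^2\exp\{\beta\tG_A(z)+o(\beta)\},
\end{equation*}
and Markov's inequality converts this into $\tau_A\le \exp\{\beta\tG_A+o(\beta)\}$.

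\textbf{Lower bound.} Let $z^*$ attain $\tG_A=\tG_A(z^*)$, and let $B$ be the connected component of $z^*$ in the subgraph of $\cS$ whose edges carry cost strictly below $\tG_A+H(z^*)$. By construction $A\cap B=\emptyset$, and every edge leaving $B$ has cost $\ge \tG_A+H(z^*)$, hence
\begin{equation*}
\sum_{x\in B,\,y\notin B}\mu_\beta(x)p_\beta(x,y)\;\le\;|\cS|^2\exp\{-\beta(\tG_A+H(z^*))+o(\beta)\}.
\end{equation*}
Combined with $\mu_\beta(B)\ge \mu_\beta(z^*)=\exp\{-\beta H(z^*)+o(\beta)\}$, this shows the exit conductance of $B$ is at most $\exp\{-\beta\tG_A+o(\beta)\}$, so from the quasi-stationary distribution in $B$ the escape rate to $A$ is bounded by $e^{-\beta\tG_A+o(\beta)}$. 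Since by construction all internal barriers of $B$ are strictly smaller than $\tG_A$, an induction on barrier height bounds the time for the chain started at $z^*$ to relax to quasi-stationarity in $B$ by $\exp\{\beta\tG_A'+o(\beta)\}$ with $\tG_A'<\tG_A$, which is negligible on the relevant time scale. Hence $\prob^{z^*}_\beta(T_A\le t)\le t\,e^{-\beta\tG_A+o(\beta)}+o(1)$, yielding $\tau_A\ge e^{\beta\tG_A-o(\beta)}$.

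The main obstacle is the quasi-stationary equilibration step in the lower bound: one must argue that when the chain exits $B$ it does so at the stationary conductance rate rather than via some faster non-equilibrium transient. The induction on barrier height resolves this cleanly at the exponential precision sought here, but extracting sub-exponential pre-factors would require the more refined Bovier-Eckhoff-Gayrard-Klein potential-theoretic machinery, which is unnecessary for the $o(\beta)$ statement of the lemma.
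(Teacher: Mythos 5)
Your approach is genuinely different from the paper's. The paper first proves a self-contained spectral reduction (Lemma~\ref{lemma:SimpleSpectral}, an elementary Donsker--Varadhan type estimate): $\tau_A$ is sandwiched between $1/\log(1/(1-\lambda_{0,A}))$ and the same quantity times a prefactor $1+\tfrac12\max_x\log(1/\mu(x))$, which is only polynomial in $\beta$ and hence absorbed into $e^{o(\beta)}$. It then estimates the eigenvalue $\lambda_{0,A}$ of the restricted kernel $P_A$ through the Rayleigh quotient: a canonical-paths/congestion argument à la Jerrum--Sinclair for the lower bound on $\lambda_{0,A}$, and the test function $\mathbf 1_B$ (with $B$ the same sublevel component around $z^*$ that you construct) for the upper bound. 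Everything is reduced to the variational principle; no quasi-stationary distributions and no mean-hitting-time identities are needed.

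There is a genuine gap in your upper bound. Whichever normalization of capacity you intend, the commute-time type estimate you invoke cannot be tight here. Take $\mu$ normalized and the clean form $E_z[T_A]\le \mu(\cS)/\mathrm{cap}(z,A)$: the Thomson flow along $\omega^*$ gives $\mathrm{cap}(z,A)\gtrsim e^{-\beta(\max_t U(\omega^*_t,\omega^*_{t+1})-H_{\min})+o(\beta)}$, so the bound reads $E_z[T_A]\lesssim e^{\beta(\tG_A(z)+H(z)-H_{\min})+o(\beta)}$, where $H_{\min}=\min_x H(x)$. Since $A=\{x:H(x)\le H_0\}$, every $z\notin A$ has $H(z)>H_0\ge H_{\min}$, so this overshoots the target $e^{\beta\tG_A+o(\beta)}$ by a factor $e^{\beta(H(z)-H_{\min})}$ that is exponentially large, not $e^{o(\beta)}$. (With the extra $1/\mu(z)$ you wrote the overshoot is worse.) The exact potential-theoretic identity is $E_z[T_A]=\bigl(\sum_y\mu(y)\,h_{z,A}(y)\bigr)/\mathrm{cap}(z,A)$, and the numerator is \emph{not} $\mu(\cS)$ but essentially $\mu$ of the valley of $z$, which is $\approx e^{-\beta(H(z)-H_{\min})}\mu(\cS)$ --- precisely the factor that is missing. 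To make the capacity route work you would have to control this sum of equilibrium potentials, which is a nontrivial extra argument. The paper sidesteps this entirely: its $\tau_A\le\frac{1}{\log(1/(1-\lambda_{0,A}))}\{1+\tfrac12\max_x\log(1/\mu(x))\}$ keeps the $1/\sqrt{\mu(x)}$ factor as an additive polynomial-in-$\beta$ prefactor, not a multiplicative exponential one.

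Your lower bound also leaves the essential step open. You correctly build the set $B$ and bound its exit conductance, but then rely on the chain starting at $z^*$ ``relaxing to quasi-stationarity'' in $B$ before exiting, resolved by an ``induction on barrier height'' that is neither formulated nor proved --- and is in fact the delicate heart of the potential-theoretic approach (there is also a risk of circularity, since such an induction would seem to presuppose the lemma for the sub-chain). The paper needs none of this: plugging $\varphi=\mathbf 1_B$ into the Rayleigh quotient (\ref{eq:Dirichlet}) gives $\lambda_{0,A}\le e^{-\beta\tG_A+o(\beta)}$ directly, and the lower bound of Lemma~\ref{lemma:SimpleSpectral}, namely $\max_x\prob_x\{T_A>t\}\ge(1-\lambda_{0,A})^t$, converts this into $\tau_A\ge e^{\beta\tG_A-o(\beta)}$ with no equilibration argument at all, since the maximum over starting states is built into the typical hitting time~(\ref{eq:TypicalHitting}). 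I would recommend switching to the spectral route (or at least invoking Lemma~\ref{lemma:SimpleSpectral}) for both directions; if you want to keep the capacity framework for the upper bound, you must replace $\mu(\cS)$ by a bound on $\sum_y\mu(y)h_{z,A}(y)$ tailored to the valley of $z$.
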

The proof can be obtained by building on known results, for instance
Theorem 6.38 in \cite{Olivieri}. These however typically apply to exit
times from local minima of $H(x)$. We provide a simple proof based on
spectral arguments in Appendix \ref{App:Barrier}.

For the sake of clarity, we split the proof of Theorem \ref{thm:Barrier}
in two parts: first  the characterization in terms of
tilted cutwidth (i.e. the first identity in Eq.~(\ref{eq:Barrier}));
then the one in terms of tilted cut (second
identity in Eq.~(\ref{eq:Barrier})).

\begin{proof}(Theorem \ref{thm:Barrier}, Tilted cutwidth).
Notice that Glauber dynamics satisfies the
hypotheses of Lemma \ref{lemma:BasicMarkovChain}, with
$H(\ux) = H(\ux)$ given by Eq.~(\ref{eq:EnergyAsynchron}).
In this case, for any allowed transition $\ux\to \uy'$,
 $H(\ux)+V(\ux,\uy)=\max(H(\ux),H(\uy))$. As a consequence, we can
drop the factor $V(\cdots )$ in Eq.~(\ref{eq:BasicMarkovChain}).
We thus obtain $\tau_+ = \exp(\beta\max_{\uz} \tG_+(\uz)+o(\beta))$ where
\begin{eqnarray}
\tG_+(\uz) =  \min_{\omega:\uz\to \up}\max_{t\le |\omega|-1}
\left[H(\omega_t)-H(\uz)\right]\, .
\end{eqnarray}
An upper bound is obtained by restricting the minimum to monotone
paths. It is not hard to realize that the
result coincides with $2\Gamma(F;\uh^F)$ where $F$ is the subgraph
induced by vertices $i$ such that $z_i=-1$.
It is far less obvious  that the optimal path
can indeed be taken to be monotone.

It is convenient to use the representation of the path
$\omega=(\ux_0=\uz,\ux_1,\dots,\ux_{|\omega|-1}=\up)$ as a sequence
of subsets of vertices: $\omega =
(S_0=S,S_1,\dots,S_{|\omega|-1}=V)$. We will consider a more general
class of paths whereby $ S_t\setminus S_{t-1} = \{v\}$ or  $ S_t
\subset S_{t-1}$, and let $G(\omega) = \max_t [H(S_t)-H(S_0)]$.

Let us start by considering the optimal initial configuration
We claim that  if $B \in \arg\max_{S} \min_{\omega:S \rightarrow V} G(\omega)$
is such an optimal configuration, then for every $A \subset B$,
$H(A) \ge H(B)$.
Indeed, suppose $H(A) < H(B)$. By prepending $B$ to any path $\omega:A
\rightarrow V$, we obtain a path $\omega':B \rightarrow V$ with
$G(\omega')< G(\omega)$. Therefore $\min_{\omega':B \rightarrow V} G(\omega')  <
\min_{\omega:A \rightarrow V} G(\omega)$ which is a contradiction.

Among all paths that achieve the optimum, choose the path $\omega$
that minimizes the  potential function $f(\omega) = |\omega|^2 |V| -
\sum_{S_i \in \omega} |S_i|$.  Intuitively, $f$  puts a very high weight
on shorter paths and then paths with larger sets. We will prove that,
with this choice, $\omega$ is monotone.

For the sake of contradiction, suppose $\omega$ is
not monotone. Let $S_{k}$ be the set with the smallest index such
that $ S_{k+1} \subset S_{k}$. Partition $ S_{k} \setminus S_{k+1}$
into two subsets $R = (S_{k} \setminus S_{k+1}) \cap S_0$ and $T =
(S_{k} \setminus S_{k+1}) \setminus S_0$. Without loss of
generality assume that for $1 \leq i \leq k$, $S_i = \{1, 2, \cdots
i\}\cup S_0$. Let $v_1 \leq v_2 \cdots \leq v_t$ be the elements of $T$
in the order of their appearance in $\omega$.

For a subset $A \subset T$, and $i \leq k$ define the marginal value
of subset $A$ at position $i$ to be $M(A, i) = H(S_i
\setminus A)-H(S_i)$. Since $H$ is submodular, $M(A, i)$ is
non-decreasing with $i$ as long as $A \subset S_i$.
Because of our claim about the initial condition, we have, in particular,
\begin{equation}
\label{eq:s0}
 M(R, 0) = H(S_0) - H(S_0 \setminus R) \geq 0\, .
\end{equation}

The crucial lemma below is proved in Appendix \ref{app:TwoCases}.
\begin{lemma}\label{lemma:TwoCases}
One of the following two statements is correct:
Case \text{(I)} There exists a subset $T' \subset T$ such that for all $i$,
 $M(T', i) \leq 0$;
Case \text{(II)} $M(T \cup R, k) \geq 0$.
\end{lemma}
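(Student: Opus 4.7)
The plan is to prove Lemma \ref{lemma:TwoCases} by contrapositive: assume Case~(I) fails and deduce Case~(II). Since $M(T',\cdot)$ is non-decreasing in its second argument (as already noted in the text), the clause ``$M(T',i)\le 0$ for all $i$ with $T'\subset S_i$'' is equivalent to the single inequality $M(T',k)\le 0$, so Case~(I) failing is equivalent to $M(T',k)>0$ for every non-empty $T'\subseteq T$. Specializing to $T'=T$ gives the concrete bound $H(S_k\setminus T)>H(S_k)$.

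The crux is to combine this with the initial-condition bound (\ref{eq:s0}) via a base-set-level version of the same monotonicity. I would show that $\widetilde M(A,X):=H(X\setminus A)-H(X)$ is non-decreasing in $X$ whenever $A\subseteq X$: for $A\subseteq X\subseteq Y$, decreasing marginals under successive addition of the elements of $Y\setminus X$ (which are automatically disjoint from $A$) give $H(Y\setminus A)-H(X\setminus A)\ge H(Y)-H(X)$, hence $\widetilde M(A,Y)\ge \widetilde M(A,X)$. This is the same submodularity argument already used in the text, now applied with an arbitrary base set in place of $S_i$.

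Applying this with $A=R$, $X=S_0$, and $Y=S_k\setminus T=S_0\cup(\{1,\dots,k\}\setminus T)$ (so that $R\subseteq S_0\subseteq Y$ since $T\cap S_0=\emptyset$), and observing $(S_k\setminus T)\setminus R=S_k\setminus(T\cup R)=S_{k+1}$, I obtain
\[
H(S_{k+1})-H(S_k\setminus T)\ =\ \widetilde M(R,\,S_k\setminus T)\ \ge\ \widetilde M(R,\,S_0)\ =\ M(R,0)\ \ge\ 0.
\]
Chaining with the first paragraph yields $H(S_{k+1})\ge H(S_k\setminus T)>H(S_k)$, i.e.\ $M(T\cup R,k)>0$, which is Case~(II). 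I do not expect a substantive obstacle: a naive application of submodularity directly to the pair $(S_k\setminus T,\, S_k\setminus R)$ only bounds $H(S_{k+1})$ from above, so the one mildly delicate step is recognizing that the right way to transport the initial-condition information $M(R,0)\ge 0$ up to step $k$ is via the monotonicity of $\widetilde M$ in the base set, rather than via the monotonicity of $M$ in~$i$.
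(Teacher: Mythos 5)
Your Case~(II) argument is correct, and in fact more conceptual than the paper's: the paper proves the same inequality by writing out $M(\,\cdot\,,\cdot\,)$ explicitly in terms of cuts and $h$-sums and observing that edges internal to $\bigcup_j T_j\cup R$ cancel, whereas you get it in two lines from monotonicity of $\widetilde M(A,X)=H(X\setminus A)-H(X)$ in $X$. The problem is the reduction that feeds into it.

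You begin by inserting the qualifier ``for all $i$ with $T'\subset S_i$'' into Case~(I), and observe that the resulting clause is equivalent (by monotonicity in $i$) to $M(T',k)\le 0$. But the lemma as stated, and as used in the main proof, requires $M(T',i)\le 0$ for \emph{all} $i\le k$, including $i<\max(T')$. There $M(T',i)=M(T'\cap S_i,\,i)$ is a marginal of a \emph{proper prefix} of $T'$, and it is not controlled by $M(T',k)$: by submodularity $M$ is only monotone in $i$ once $T'\subseteq S_i$, and prefixes can easily have positive marginal at an intermediate time while the full set $T'$ has non-positive marginal at time $k$ (e.g.\ $T=\{v_1,v_2\}$ with $v_1$ densely connected inside $S_{v_2-1}$ and $v_2$ densely connected to $V\setminus S_k$). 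Consequently ``Case~(I) fails'' does \emph{not} imply $M(T,k)>0$; it only says that for each $T'$ some $i$ has $M(T',i)>0$, and that $i$ may sit strictly below $\max(T')$. Your contrapositive therefore proves only the weakened dichotomy in which Case~(I) carries the extra qualifier, and that weakened version is genuinely insufficient for Theorem~\ref{thm:Barrier}: the construction of $\omega'$ deletes $T'$ from the first $k$ steps, so the new barrier is $\max_{0\le i\le k}H(S_i\setminus T')$, and the problematic indices are precisely the $i$ between $\min(T')$ and $\max(T')$ that the qualifier discards.

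This is exactly what the paper's partition $T_1,\dots,T_r$ is built to control. The block $T_r$ is chosen so that every proper prefix of it has strictly negative marginal at the step where its next element appears; chaining these with monotonicity in $i$ gives $M(T_r,i)<0$ for all $i\ge\min(T_r)$, and $M(T_r,i)=0$ trivially for $i<\min(T_r)$, which is the unqualified statement in Case~(I). Taking $T'=T$ and checking only $M(T,k)$ does not reproduce this, so the reduction step needs something like that partition (or an equivalent ``last sticking block'' construction) before your clean $\widetilde M$ argument can be invoked.
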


We are now ready to finish the proof. Suppose the
first statement of the lemma is correct.  We construct a new path $\omega'$ by
removing the vertices of $T'$ from the sequence $1,2,\cdots, t$ in
the beginning of $\omega$ and also removing $T'$ from $T$. Since $\omega'$ is
shorter than $\omega$, we only need to argue that $G(\omega') \le
G(\omega)$. This is obvious because for every $i \leq k$,
$ H(S_i \setminus T')- H(S_i) = M(T', i)  \leq 0$.

In the second case, we construct another path by changing $S_{k+1}$.
First note that since $\omega$ is minimizing the potential function,
$S_{k+2} = S_{k+1} \cup \{v\}$ for some $v$ that is not in $S_k$.
Now note that by replacing $S_{k+1}$ with $S_k \cup \{v\}$ we obtain
a path with a higher value of the potential function and at most the
same barrier. This is because
\begin{eqnarray}
H(S_{k+1} \cup \{v\})- H(S_{k} \cup \{v\})   \geq H(S_{k+1})- H(S_{k})
 = M(T \cup R, k) \geq 0\, .
\end{eqnarray}

\end{proof}

The second part of the proof exploits the well known fact
that Glauber dynamics is monotone
for the Ising model. Given initial conditions
$\ux(0)$ and $\ux'(0)\succeq \ux(0)$, the corresponding evolutions
can be coupled in such a way that $\ux'(t)\succeq \ux(t)$ after any number
of steps.

\begin{proof} (Theorem \ref{thm:Barrier}, Tilted cut).
By monotonicity of Glauber dynamics $\Gamma_*(G;\uh)\ge \Gamma_*(F;\uh^F)$
for any induced subgraph $F\subseteq G$. Theorem
\ref{lemma:BasicMarkovChain} implies $\Gamma_*(F;\uh^F)\ge \Delta(F;\uh^F)$:
indeed given a path $\omega=(S_0,S_1,\dots ,S_{|\omega|-1}=V)$
this must have at least one step in $\dOmega$. Hence
$\Gamma_*(G;\uh)\ge \max_F \Delta(F;\uh^F)$.

We need to prove $\Gamma_*(G;\uh)\le \Delta(F;\uh^F)$ for at least
one induced subgraph $F$. Fix $F$ to be a subgraph which achieves the maximum
in Eq.~(\ref{eq:Barrier}) (i.e. $\arg\max\Gamma(F;\uh^F)$).
Notice that, to leading exponential order, the hitting
time in $F$ is the same as in $G$, i.e.
$\Gamma_*(F;\uh^F) = \Gamma_*(G;\uh)$.

Let $p_{\beta}(\ux,\uy)$ be the transition probabilities
of Glauber dynamics on $F$, and $p_{\beta}^+(\ux,\uy)$ the kernel restricted
to $\{ +1,-1\}^{V(F)}\setminus\{\up\}$.
By this we mean that we set $p_{\beta}^{+}(\ux,\up)=p_{\beta}^{+}(\up,\uy)=0$.
Denote by $P^+_{\beta}$ the matrix with entries $p_{\beta}^{+}(x,y)$
and by $\psi_0$ its eigenvector with largest eigenvalue.
By Perron-Frobenius Theorem, we can assume $\psi_0(\ux)\ge 0$.
We claim that $\psi_0(\ux)$ is monotonically decreasing in $\ux$.
Indeed consider the transformation $\psi\mapsto T(\psi) \equiv P^+_{\beta}\psi/
||P^+_{\beta}\psi ||_{2,\mu}$. This is a continuous mapping from the
set of unit vectors in
$L^2(\mu)$ onto itself. Further, if $\psi$ is monotone and
non-negative, $T(\psi)$ is monotone an non-negative as well
(the first property follows from monotonicity of the dynamics).
The set of non-negative and monotone unit vectors in $L^2(\mu)$
is homeomorphic to a simplex.
By Brouwer fixed point theorem, $T$ has at least one fixed point that is
non-negative and monotone, which therefore coincides with $\psi_0$
by Perron-Frobenius.

Lemmas \ref{lemma:SimpleSpectral} and \ref{lemma:Eigenvectors}
imply that there exists $\Omega=\{x\in\cS:\, \psi_0(\ux)> b\}$,
such that
\begin{eqnarray}
\tau_+(F;\uh^F)\le  C_n(1+\beta)\, \frac{\sum_{\ux\in\Omega}\mu(\ux)}
{\sum_{(\ux,\uy)\in\dOmega}\mu(\ux)p^+_\beta(\ux,\uy)}\, .
\end{eqnarray}
for some $\beta$-independent constant $C_n$. Using
$\tau_+(F;\uh^F)=\exp\{2\beta\Gamma_*(F;\uh^F)+o(\beta)\}$ and
the large $\beta$ asymptotics of $\mu(\ux)$, $p^+_\beta(\ux,\uy)$
we get
\begin{eqnarray}
\Gamma_*(F;\uh^F)\le  \min_{(S_1,S_2) \in\dOmega}\,
\max_{i=1,2}\left[\cut(S_i,V\setminus S_i)-|S_i|_h\right] + o_{\beta}(1)\, .
\end{eqnarray}
Since $\psi_0(\ux)$ is monotone, $\Omega$ is monotone as well
and therefore the last inequality implies the thesis.
\end{proof}

\vspace{-0.5cm}

%
%
\subsection{Theorem \ref{thm:Examples}}

\begin{proof}(Lemma \ref{lemma:Isoperimetric}).
By Theorem  \ref{thm:Barrier}, it is sufficient to find an upper bound for $\Gamma(\tilde{F};\uh^{\tilde{F}})$
for every induced subgraph $\tilde{F}$. By monotonicity of
$\Gamma(\tilde{F};\uh)$ with respect to $\uh$,  $\Gamma(\tilde{F};\uh^{\tilde{F}})
\le  \Gamma(\tilde{F};\uh)$.
We will upper bound $\Gamma(\tilde{F};\uh)$ by showing
Eq.~(\ref{tiltedexpansion}) holds for any induced subgraph $F\subseteq \tilde{F}$.

First notice that, for any $U$ and for any $\theta$,
 there exists $S\subseteq U$
such that $|S|_h\in J(\theta)$ and
\begin{eqnarray}
\cut(S,U\setminus S)-\frac{1}{4}|S|_h\le \alpha h_{\rm min}^{-\gamma}|S|_h^{\gamma}
- \frac{1}{4}|S|_h\le
A'(\alpha,\gamma) \, h_{\rm min}^{-\gamma/(1-\gamma)}
\, ,\label{eq:IsoperimetricBound}
\end{eqnarray}
where $A'(\alpha,\gamma) = \max( \alpha x^{\gamma}-x/4 :\, x\ge 0)$.
Take $L_1=A'(\alpha,\gamma) \, h_{\rm min}^{-\gamma/(1-\gamma)}$ and
$L_2= L_1+2h_{\rm max}$.
By Eq.~(\ref{eq:IsoperimetricBound})
\begin{eqnarray}
\min_{|S|_h\in[L_1,L_2]}\;\left[\cut(S,V(F)\setminus S)-\frac{1}{4}|S|_h\right]
\le L_1\, \nonumber.
\end{eqnarray}
Finally the cutwidth of any set $S$ with $|S|_h\le L_2$
is upper bounded by $\alpha|S|^{\gamma}\log |S|$ (using \cite{LR} and
Eq.~(\ref{eq:IsoperimetricHypo})) which is at most
$C= A''(\alpha,\gamma,h_{\rm max}) \, h_{\rm min}^{-1/(1-\gamma)}
\log\max(2,h_{\rm min}^{-1})$.
The thesis thus follows by applying Theorem \ref{thm:Crux}.

To prove the lower bound we use
Theorem \ref{thm:Barrier} again. Let $F$ be the subgraph induced by $U$.
By monotonicity of $\Delta(G;\uh)$ with respect to $\uh$,
for $t=\lfloor \delta|U|\rfloor$, we have
\begin{eqnarray}
\Delta(F;\uh^F) \ge \Delta(F;h_{\rm max}+k)\ge \min_{|S|= t}
\left[\lambda|S|-(h_{\rm max}+k)|S|\right]\, \nonumber.
\end{eqnarray}
which implies the thesis.
\end{proof}

We notice in passing that the estimates in the second part of this proof
could be improved by using more specific arguments instead of
directly applying Theorem \ref{thm:Barrier}.

For the proof of theorem \ref{thm:Examples}, we need to estimate
the isoperimetric function of finite range $d$-dimensional graphs.
This can be done by an appropriate relaxation.

Given a function $f:V\to \reals$, $i\mapsto f_i$, and a set of
non-negative weights $w_i$, $i\in  V$, we define
\begin{eqnarray}
||f||_w^2 \equiv \sum_{i\in V}w_i\, f_i^2\, ,\;\;\;\;\;\;
||\nabla_G f||^2 \equiv \sum_{(i,j)\in E}|f_i-f_j|^2\, .
\end{eqnarray}
We then have the following generalization of Cheeger inequality.
\begin{lemma}\label{lemma:Cheeger}
assume there exists two vertex sets $\Omega_1\subseteq\Omega_0\subseteq V$
and a function $f:V\to \reals$ such that:
$(1)$ $f_i \ge |f_j|$ for any $i\in \Omega_1$ and any $j\in V$;
$(2)$ $f_i = 0 $ for $i\in V\setminus\Omega_0$;
$(3)$ $L_1\le |\Omega_1|_w\le |\Omega_0|_w\le L_2$; $(4)$
$||\nabla_G f||^2\le \lambda \, ||f||_h^2$.
Then there exists $S\subseteq V$ with $L_1\le |S|_w\le L_2$
\begin{eqnarray}
\cut(S,V\setminus S)\le \sqrt{4\lambda\,\max_{i\in V}\{|\di|/h_i\}}\;\; |S|_h\, .
\end{eqnarray}
\end{lemma}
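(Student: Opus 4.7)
The plan is a Cheeger-type inequality via a random level-set (coarea) argument. First I would reduce to the case $f_i \ge 0$ by replacing $f$ with $|f|$: conditions $(1)$ and $(2)$ are visibly preserved, $(3)$ does not involve $f$, and $(4)$ is preserved because $\||f|\|_h = \|f\|_h$ and $||f_i|-|f_j|| \le |f_i-f_j|$ edge by edge, so $\|\nabla_G |f|\|^2 \le \|\nabla_G f\|^2$.

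Next, set $M = \max_i f_i^2$. Hypothesis $(1)$ forces $f_i^2 = M$ for every $i \in \Omega_1$ (pick $j$ to be a maximizer of $|f_j|$). Draw a threshold $t$ uniformly from $[0, M)$ and let $S_t \equiv \{i \in V : f_i^2 > t\}$. By $(2)$ we have $S_t \subseteq \Omega_0$, and by the observation just made $\Omega_1 \subseteq S_t$ for every admissible $t$; hypothesis $(3)$ then gives $L_1 \le |\Omega_1|_w \le |S_t|_w \le |\Omega_0|_w \le L_2$. The key point is that the size constraint is enforced through a $t$-independent sandwich, which will decouple it from the $h$-weighted averaging done below.

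The analytic heart is the standard layer-cake computation $\E |S_t|_h = M^{-1}\|f\|_h^2$ together with
\begin{equation}
\E[\cut(S_t,V\setminus S_t)] \;=\; M^{-1}\sum_{(i,j)\in E}|f_i^2 - f_j^2|.
\end{equation}
Factoring $f_i^2 - f_j^2 = (f_i - f_j)(f_i + f_j)$ and applying Cauchy--Schwarz on the edge sum, then using $(f_i + f_j)^2 \le 2(f_i^2 + f_j^2)$, the telescoping identity $\sum_{(i,j)\in E}(f_i^2+f_j^2) = \sum_i |\di| f_i^2$, the trivial bound $\sum_i |\di| f_i^2 \le \max_i\{|\di|/h_i\}\cdot \|f\|_h^2$, and finally condition $(4)$, one arrives at
\begin{equation}
\sum_{(i,j)\in E}|f_i^2-f_j^2| \;\le\; \sqrt{4\lambda\, \max_i\{|\di|/h_i\}}\;\|f\|_h^2,
\end{equation}
so that $\E[\cut(S_t,V\setminus S_t)] \le \sqrt{4\lambda\max_i\{|\di|/h_i\}}\cdot \E|S_t|_h$.

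Since the ratio of the two expectations is bounded by this constant, there must exist at least one realization $t^{\ast} \in [0, M)$ at which the pointwise inequality $\cut(S_{t^\ast}, V\setminus S_{t^\ast}) \le \sqrt{4\lambda\max_i\{|\di|/h_i\}}\cdot |S_{t^\ast}|_h$ holds (otherwise averaging over $t$ contradicts it). The set $S_{t^\ast}$ satisfies all the required conditions. I do not anticipate a serious obstacle: the only delicate bookkeeping is keeping the two weight systems separated, but that is handled automatically since $w$-mass is controlled by the $t$-independent inclusions $\Omega_1 \subseteq S_t \subseteq \Omega_0$ while the Cheeger-style averaging involves only $h$-weights and the Dirichlet energy.
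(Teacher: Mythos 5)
Your argument is correct and is essentially the same as the paper's: the authors likewise pass to the normalized nonnegative case, apply the Cheeger-type Cauchy--Schwarz bound $\|\nabla_G f\|^2 \ge \bigl(\sum_{(i,j)\in E}|f_i^2-f_j^2|\bigr)^2 / \sum_{(i,j)\in E}(f_i+f_j)^2$, identify $\sum_{(i,j)\in E}|f_i^2-f_j^2|$ with $\int_0^1 \cut(S_z,V\setminus S_z)\,\mathrm{d}z$, and pick the threshold minimizing the ratio $\cut(S_z,V\setminus S_z)/|S_z|_h$, using the $t$-independent sandwich $\Omega_1\subseteq S_z\subseteq\Omega_0$ to control $|S_z|_w$. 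Your random-threshold phrasing and explicit factoring of $f_i^2-f_j^2$ are a cosmetic repackaging of the same integral and Cauchy--Schwarz steps.
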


The proof of this Lemma is deferred to Appendix \ref{sec:CheegerProof}.
\begin{proof}(Theorem \ref{thm:Examples})
\emph{Finite-range $d$ dimensional networks.} We need to prove that,
for each induced subgraph $G'$, $\Gamma(G';\uh^{G'})=O(1)$. By
Theorem \ref{thm:Crux}, it is sufficient to show that, for any
induced and connected  subgraph $F$, there exists a set $S$ of
bounded size such that $\cut(S,V(F)\setminus S)-
\frac{1}{4}|S|_{(h)^F}\le 0$, with $h'_i=h_i/4$. If the original
graph is embeddable, any induced subgraph is embeddable as well.
Since $h^F_i\ge h_i$, the thesis follows by proving that for any
embeddable graph $G$, we can find a set of vertices $S$ of bounded
size with $\cut(S,V\setminus S)\le |S|_{h/4}$.

We will construct a function $f$ with bounded support such that
$||\nabla_G f||^2\le \lambda ||f||^2$ with $\lambda =
\min_{i\in V}\{\frac{h_i}{16|\di|}\}$. In order to achieve this goal,
consider the $d$-dimensional of $G$ and partition $\reals^d$
in cubes $\cC$ of side $\ell$ to be fixed later. Denote by $\cC_0$ the cube
maximizing $\sum_{i:x_i\in \cC} h_i$, and let
 $C_j$, $j=1,\dots 3^{d}-1$ be the adjacent cubes.
Let $f_i = \varphi(x_i)$, where for $x\in\reals^d$, we have
\begin{eqnarray}
\varphi(x) = \left[1-\frac{d_{\Eu}(x,\cC)}{\ell}\right]_+\, .
\end{eqnarray}
Notice that $|\nabla\varphi(x)|\le 1/\ell$  and
$|\nabla\varphi(x)|>0$ only if $x\in\cC_j$, $j=1,\dots 3^{d-1}$.
Since $|f_i-f_j|\le |\nabla\varphi|\;\; ||x_i-x_j||$ we have
\begin{eqnarray}
||\nabla_G f||^2&\le& \left(\frac{K}{\ell}\right)^2\sum_{i\in V}|\di|\;
\ind(x_i\in\cup_{j=1}^{3^d-1} \cC_j)\le
\left(\frac{K}{\ell}\right)^2\max_{i\in V}
\{|\di|/h_i\}\sum_{i\in V} h_i\;
\ind\Big(x_i\in\cup_{j=1}^{3^d-1} \cC_j\Big)\nonumber\\
& \le &
3^d \left(\frac{K}{\ell}\right)^2
\max_{i\in V}\{|\di|/h_i\}\sum_{i\in V} h_i\;
\ind\Big(x_i\in\cC_0\Big)\le 3^d \left(\frac{K}{\ell}\right)^2
\max_{i\in V}\{|\di|/h_i\}||f||^2_h \, .
\end{eqnarray}
The thesis follows by choosing $\ell = 2^{d+2}K \max_{i\in
V}\{|\di|/h_i\}$. \vspace{0.1cm}

\emph{Small world networks with $r\ge d$.} Let $U$ be a subset of
vertices forming a cube of side $\ell$, and $G_U$ a $(\ve,k-5/2)$,
$k$-regular expander with vertex set $U$. Such a graph exists for
all $\ell$ large enough and $\ve$ small enough by
\cite{ExpanderExist}. Call $A_U$ the event that the subgraph induced
by long-range edges in $U$ coincides with $G_U$, and no long-range
edge from $i\in V\setminus U$ is incident on $U$.

Under $A_U$, the subgraph $G_U$ satisfies the hypotheses of
Lemma \ref{lemma:Isoperimetric}, second part, with $b= d$.
Therefore $\Gamma_*(G;\uh)\ge (k-5/2-h_{\rm max}-d)\lfloor\ve\ell^d/4\rfloor$.
The thesis thus follows if we can prove the existence of $U$
with volume $\ell^d= \Omega(\log n/\log\log n)$ such that $A_U$ is true.

Fix one such cube $U$. The probability that the long range edges inside
$U$ induce the expander $G_U$ is larger than
$(C(n)\ell^{-r})^{k\ell^{d}} $. On the other hand, for any vertex
$i\in U$, the probability that no long range edge from $V\setminus U$
is incident on $U$ is lower bounded as
\begin{eqnarray}
\prod_{j\in V\setminus i} \left[1-C(n)|i-j|^{-r}\right]^{k}\ge
\exp\Big\{-3k\, C(n)\, \sum_{j\in V\setminus i} |i-j|^{-r}\Big\}\nonumber
\end{eqnarray}
where we used the lower bound $1-x\ge e^{-3x}$ valid for all
$x\le 1/2$, together with the fact that $C(n)\le 1/2d$ (which follows by
considering the $2d$ nearest neighbors). From the definition of
$C(n)$, the last expression is lower bounded by $e^{-3k}$, whence
\begin{eqnarray}
\prob\{A_U\} \ge \left[C(n)e^{-3} \ell^{-r}\right]^{k\ell^d}\,\nonumber .
\end{eqnarray}

Let $S$ denote a family of $(n/\ell^d)$ disjoint subcubes, and
denote by $N_S$ the number of such subcubes for which property $A_U$
holds. Then $\E[N_S]=(n/\ell^d) \prob\{A_U\}$. Using the above lower
bound together with the fact $C(n)\ge C_{r,d}>0$ for $r>d$ and
$C(n)\ge C_{*,d}/\log n$ for $r=d$, it follows that there exists
$a,b>0$ such that $\E[N_S] =\Omega (n^a)$ if $ell^d\le b\log
n/\log\log n$.

The proof if finished by noticing that, for $U\cap U'=\empty$,
$\prob\{A_U\cap A_{U'}\}\le \prob\{A_U\cap A_{U'}\}$,
whence $\Var(N_S)\le \E[N_S]$. The thesis follows applying
Chebyshev inequality to $N_S$.
\vspace{0.1cm}

\emph{Small world networks with $r< d$.} It is proved in \cite{Abie}
that these graphs are with high probability expanders. The thesis
follows from Lemma \ref{lemma:Isoperimetric}. \vspace{0.1cm}

\emph{Random regular graphs.} It is well known that a random $k$-regular
graph is with high probability a $k-2-\delta$ expander for all
$\delta>0$ \cite{ExpanderExist}. The thesis follows again from
Lemma \ref{lemma:Isoperimetric}.
\end{proof}

{\bf Acknowledgement.} We would like to thank Daron Acemoglu, Glenn
Ellison,  Fabio Martinelli, Roberto Schonmann, Eva Tardos, Maria Eulalia Vares,
and Peyton Young for helpful discussions and pointers to the literature.

%
%
\bibliographystyle{amsalpha}

%
%
\appendix

\section{Proof of Lemma \ref{lemma:Cheeger}}\label{sec:CheegerProof}

Assume without loss of generality that $\max\{|f_i|\;:\;\; i\in V\}=1$,
whence $f_i=1$ for $i\in \Omega_1$.
We use the same trick as in the proof of the standard Cheeger inequality
\begin{eqnarray}
||\nabla_G f||^2 =\sum_{(i,j)\in E}(f_i-f_j)^2\ge
\frac{\left(\sum_{(i,j)\in E}|f_i^2-f_j^2|\right)^2}{\sum_{(i,j)\in E}
(f_i+f_j)^2}\, .
\end{eqnarray}
The denominator is upper bounded by
\begin{eqnarray}
4\sum_{i\in V} |\di|\, f_i^2\le
4\max\left|\frac{|\partial i|}{h_i}\right|\;\; ||f||_h^2\, .
\end{eqnarray}
The argument in parenthesis at the numerator is instead equal to
\begin{eqnarray}
\sum_{(i,j)\in E}\int_0^1 \left|\ind(f^2_i>z)-\ind(f^2_j>z)\right| \de z
=\int_{0}^1\cut(S_z,V\setminus S_z)\, \de z
\end{eqnarray}
where $S_z = \{i\in V:\; f_i^2>z\}$.
The quantity above is lower bounded by
\begin{eqnarray}
&&\min_{z\in [0,1]}
\frac{\cut(S_z,V\setminus S_z)}{|S_z|_h}
\int_{0}^1|S_z|_h\, \de z =
\min_{z\in [0,1]}
\frac{\cut(S_z,V\setminus S_z)}{|S_z|_h}
\; ||f||_h \, .
\end{eqnarray}
Let $S= S_{z_*}$ where $z_*$ realizes the above minimum (the
function to be minimized is piecewise constants and right continuous
hence the minimum is realized at some point).
 Notice that
$\Omega_1\subseteq  S_z\subseteq \Omega_0$ for all $z\in[0,1]$,
and thus we have in particular $L_1\le |S|_w\le L_2$. Further, form the above
\begin{eqnarray}
\lambda\ge\frac{||\nabla_G f||^2}{||f||_h^2}\ge \frac{1}{4}\min\left|\frac{h_i}{|\di|}
\right|
\left\{\frac{\cut(S,V\setminus S)}{|S|_h}\right\}^2
\end{eqnarray}
which finishes the proof.
\endproof

%
%
\section{Hitting times at low temperature:
proof of Lemma \ref{lemma:BasicMarkovChain}}\label{App:Barrier}

We consider a general setting of Lemma \ref{lemma:BasicMarkovChain}:
a discrete time Markov chain with state space $\cS$,
transition probabilities $p_{\beta}(x,y)$, reversible with respect
to the stationary distribution $\mu(x)$.
Given $A\subseteq \cS$ define
$p_{\beta}^A(x,y) = p_{\beta}(x,y)$ if $x,y\in \cS\setminus A$
and $p_{\beta}^A(x,y)= 0$ otherwise. Notice by reversibility the
eigenvalues of $p_{\beta}^A$ are real, and smaller than $1$.
We assume that $p_{\beta}^A$ is irreducible and aperiodic.

The lower bound in the next lemma is due to Donsker and
Varadhan \cite{Donsker}: we nevertheless propose an
elementary proof.
\begin{lemma}\label{lemma:SimpleSpectral}
If $1-\lambda_{0,A}$ is the largest eigenvalue of $p_{\beta}^A$, then
\begin{eqnarray}
\frac{1}{\log(1/(1-\lambda_{0,A}))}\le \tau_A\le
\frac{1}{\log(1/(1-\lambda_{0,A}))}\left\{1+\frac{1}{2}\max_{x\in \cS\setminus A}
\log\frac{1}{\mu(x)}\right\}\,\nonumber .
\end{eqnarray}
\end{lemma}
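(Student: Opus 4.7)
The plan is to represent the survival probability $\prob^x\{T_A\ge t\}$ as an application of powers of the killed kernel $p^A_\beta$, and then exploit the spectral decomposition provided by reversibility. For any $x\in\cS\setminus A$,
\[
\prob^x\{T_A\ge t\} \;=\; (p^A_\beta)^{t-1}\mathds{1}(x),
\]
where $\mathds{1}$ is the constant function $1$ on $\cS\setminus A$. Reversibility makes $p^A_\beta$ self-adjoint in $L^2(\mu)$, so it admits an orthonormal eigenbasis $\{\psi_i\}$ with real eigenvalues $\nu_i$ satisfying $|\nu_i|\le \nu_0 = 1-\lambda_{0,A}$; by Perron-Frobenius (applicable since $p^A_\beta$ is irreducible and aperiodic on $\cS\setminus A$) we may choose $\psi_0\ge 0$.

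For the upper bound I would expand $\mathds{1}=\sum_i c_i\psi_i$, observe that $\sum_i c_i^2 = \|\mathds{1}\|_{L^2(\mu)}^2 \le 1$, and apply Cauchy-Schwarz:
\[
\prob^x\{T_A\ge t\} \;=\; \sum_i c_i\,\nu_i^{t-1}\psi_i(x) \;\le\; (1-\lambda_{0,A})^{t-1}\sqrt{\sum_i \psi_i(x)^2}.
\]
The key identity $\sum_i \psi_i(x)^2 = 1/\mu(x)$ is the dual Parseval relation, which drops out of viewing $(x,i)\mapsto \sqrt{\mu(x)}\,\psi_i(x)$ as an orthogonal matrix. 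Imposing $(1-\lambda_{0,A})^{t-1}\mu(x)^{-1/2}\le e^{-1}$, solving for $t$, and taking the supremum over $x\in\cS\setminus A$ yields the stated upper bound on $\tau_A$.

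For the lower bound I would test the survival probability against $\psi_0$. Self-adjointness gives
\[
\sum_{x\notin A}\mu(x)\psi_0(x)\prob^x\{T_A\ge t\} \;=\; (1-\lambda_{0,A})^{t-1}\sum_{x\notin A}\mu(x)\psi_0(x),
\]
and bounding the left-hand side by $\bigl(\max_{x\notin A}\prob^x\{T_A\ge t\}\bigr)\sum_{x\notin A}\mu(x)\psi_0(x)$, valid because $\psi_0\ge 0$, yields
\[
\max_{x\notin A}\prob^x\{T_A\ge t\} \;\ge\; (1-\lambda_{0,A})^{t-1}.
\]
Whenever the right-hand side exceeds $e^{-1}$, the defining condition of $\tau_A$ fails, producing the claimed lower bound.

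The only non-routine technical ingredients are (a) Perron-Frobenius on the irreducible aperiodic restriction $p^A_\beta$, to guarantee a nonnegative principal eigenvector for the test-function argument; and (b) the completeness identity $\sum_i \psi_i(x)^2 = 1/\mu(x)$, which is what turns the spectral estimate into a pointwise bound and is the source of the $\frac{1}{2}\log(1/\mu(x))$ correction in the upper bound. Neither step poses a serious obstacle; the main thing to get right is matching the $\mu(x)^{-1/2}$ prefactor in the upper bound with the $L^2(\mu)$ normalization of the eigenbasis. Small additive discrete-time offsets arising from the $(t-1)$ exponent are harmless, since the lemma is eventually applied in the $\beta\to\infty$ limit where they disappear into the $o(\beta)$ term of Lemma~\ref{lemma:BasicMarkovChain}.
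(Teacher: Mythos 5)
Your proof is correct and follows essentially the same route as the paper: both the upper and lower bounds proceed by writing the survival probability as $(P_A)^{t-1}$ applied to the indicator of $\cS\setminus A$ and then exploiting the self-adjointness of the killed kernel in $L^2(\mu)$. The only cosmetic difference is in the upper bound: you pass through a full spectral expansion, Cauchy--Schwarz, and the completeness identity $\sum_i\psi_i(x)^2=1/\mu(x)$, whereas the paper reaches the same $\mu(x)^{-1/2}$ prefactor more directly from $\mu(x)\,g(x)^2\le\|g\|_{\mu,2}^2$ together with $\|P_A^t f\|_{\mu,2}\le(1-\lambda_{0,A})^t$. Both variants implicitly use the Perron--Frobenius fact that $1-\lambda_{0,A}$ is the spectral radius, and both are equally loose about the harmless additive $O(1)$ offsets in $t$, which you correctly note vanish in the $\beta\to\infty$ regime where the lemma is applied.
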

\begin{proof}
Let $P_A$ denote the matrix with entries $p_{\beta}^A(x,y)$, and $f(x)$
be the characteristic function of $\cS\setminus A$. Then
$\prob_x\left\{T_A>t\right\} = P_A^tf(x)$, whence
\begin{eqnarray}
\sqrt{\mu(x)}\,\prob_x\{T_A>t\} \le \sqrt{\sum_x\mu(x)\prob_x\{T_A>t\}^2}
= ||P_A^tf||_{\mu,2}\le (1-\lambda_{0,A})^t\, \nonumber,
\end{eqnarray}
which proves the upper bound. To prove the lower bound, let $\psi_0(x)$
denote the eigenvector of $P_A$, with eigenvalue $\lambda_{0,A}$ and notice
that by Perron-Frobenius theorem, it has non-negative entries. Therefore
\begin{eqnarray}
\max_x\prob_x\{T_A>t\}\, (\psi_0,f)_{\mu} \ge \sum_x \mu(x)\psi_0(x)\prob_x\{T_A>t\} = (1-\lambda_{0,A})^t(\psi_0,f)\,\nonumber .
\end{eqnarray}
\end{proof}

\begin{proof}(Lemma \ref{lemma:BasicMarkovChain}).
Due to Lemma \ref{lemma:SimpleSpectral}, it is sufficient to prove
that $\lambda_{0,A} = \exp\{-\beta\tG_A+o(\beta)\}$. To this end
we use the well known variational characterization of eigenvalues
\begin{eqnarray}
\lambda_{0,A} = \inf_{\varphi}\;\frac{\Dir(\varphi)}{\E(\varphi^2)}\, ,\;\;\;\;\;
\;\;\;\;\;\;\;\Dir(\varphi) \equiv \frac{1}{2}
\sum_{x,y}\mu(x)p_{\beta}(x,y)(\varphi(x)-\varphi(y))^2\, .
\label{eq:Dirichlet}
\end{eqnarray}
Here the $\inf$ is taken over functions non-vanishing
functions $\varphi:\cS\setminus A\to \reals$.

A lower bound can be obtained by comparison. More precisely, for
each $z\in \cS\setminus A$, let $\omega^{(z)}$ be a path or allowed
transition from $z$ to $A$. Proceeding along the lines of
\cite{JerrumSinclair,DiaconisSaloffCoste}, one obtains that
$\lambda_{0,A}\ge 1/\max_{x,y} C(x,y;\omega)$, where, for each
allowed transition $x\to y$, we defined the associated congestion as
\begin{eqnarray}
C(x,y;\omega) = \frac{1}{\mu(x)p_{\beta}(x,y)} \sum_{z:\omega^{(z)}\ni (x,y)}
\mu(z)|\omega^{(z)}|\, \nonumber .
\end{eqnarray}
The thesis then follows by choosing the path
$\omega^{(z)}$ in such a way to achieve the minimum in
Eq.~(\ref{eq:BasicMarkovChain}) and taking the limit $\beta\to\infty$.

To get an upper bound, define the boundary $\partial B$
of a configuration $B$, as the subset of couples $(x,y)$ such that
$p_{\beta}(x,y)>0$ and $x\in B$, while $y\not\in B$.
Notice that from Eq.~(\ref{eq:BasicMarkovChain})
it follows that there exists a set $B\subseteq \cS\setminus A$ such that
\begin{eqnarray}
\tG_A = \min_{(x,y)\in\partial B} [H(x)+V(x,y)]-\min_{z\in B} H(z)\, \nonumber .
\end{eqnarray}
The proof is completed by taking $\varphi$ in Eq.~(\ref{eq:Dirichlet})
to be the characteristic function of $B$.
\end{proof}

%
%
\section{Proof of Lemma \ref{lemma:TwoCases}}\label{app:TwoCases}

Construct the following partitioning of $T$ into $T_1 = \{ v_1, v_2,
\cdots  v_{i_1-1}\}$, $T_2 = \{v_{i_1}, v_{i_1+1}, \cdots
v_{i_2-1}\}$ $\cdots T_r = \{v_{i_{r-1}} \cdots v_k \}$ in such a
way that for every $T_j = \{v_{i_{j-1}}, \cdots v_{i_j - 1}\}$ and
$i_{j-1} < l < i_{j}$, $M(T_j, v_l - 1) = M(\{v_{i_{j-1}}\cdots v_{l-1}\}, v_l -1) < 0$ and for $l = i_j$,
$M(T_j, v_l - 1) \geq 0$.

Such a partition can be obtained the following way. Start with $j= 1$ and iteratively add $v_i$'s to the current set $T_j$. If $M(T_j, v_i - 1) \geq 0$, increment $j$ and add $v_i$ and the next vertices to the new subset.

Let $T_ r = \{v_s, \cdots, v_t\}$ be the last subset in the above
sequence. We claim that if $M(T_r, k) < 0$ then $M(T_r, i) < 0$ for
all $i \geq v_s$. For every $s \leq j \leq t $ and every $i$ between
$v_j$ and $v_{j+1}$  by supermodularity $M(T_r, i) = M(\{v_l, \cdots
v_j\}, i) \leq M(\{v_l, \cdots v_j\}, v_{j+1} -1) < 0$. The same
argument goes for $v_t \leq i \leq k$. In that case the lemma is
correct for $T' = T_r$.

If $M(T_r, k) \geq 0$, we will show that the second statement of the
lemma is true. For that, we need to write the $H$ function for all
sets $T_1, \cdots T_r$ explicitly.  For a set $T_j$ and $l = i_j$
\begin{eqnarray}
M(T_j, v_l - 1) = 2 \left[ \cut(T_j, \{1, 2, \cdots v_l - 1 \}) -
\cut(T_j, \{v_l, v_l + 1, \cdots n\}) +
 \sum_{i \in T_j} h_i \right] \geq 0\, .
\end{eqnarray}

One can write a similar equation $j = l$ by replacing $v_l - 1$ with
$k$. Equation (\ref{eq:s0}) gives a similar inequality for $R$.
Adding up these inequalities for all $j$ and $R$ and noting that the
contribution of every edge with both ends in $\cup_{j} T_j \cup R$ cancels out, we get
\begin{eqnarray}
 M(T \cup R, k) \geq \sum_{j=1}^{l-1} M(T_j, v_{i_j} -1)  + M(T_l, k) + M(R, 0) \geq 0.
\end{eqnarray}
\endproof

\section{Proof of Theorem \ref{thm:Crux}}

\begin{proof}(Theorem \ref{thm:Crux}).
Partition $V$ into subsets $R_1, R_2, \cdots, R_l$ by letting
$V_0\equiv V$ and defining recursively
\begin{eqnarray}
 R_t = \argmin_{S\in \Omega_t} \{ \cut(S, V_t \setminus S) -
|S|_{h^{V_t}} \}\nonumber
\end{eqnarray}
where $V_t = V \setminus \cup_{s=1}^{t-1} R_s$ and $\Omega_t$ is the
set of all subsets $S\subseteq V_t$ such that $L_1\le |S|_{h}\le
L_2$.
 With an
abuse of notation, we wrote $\uh^{V_t}$ for $\uh^{G(V_t)}$ ($G(V_t)$
being the subgraph induced by $V_t$). Explicitly, for any $j\in
V_t$, $(h^{V_t})_j = h_j+ |\partial j|_{V\setminus V_t}$.

Continue this process until no such set $S$ can be found, and let
$R_l=V_l$ be the residual set. Notice that, since $L_2\ge h_{\rm
max}$, we necessarily have $|R_l|_h<L_1$. By applying
Eq.~(\ref{tiltedexpansion}) to $F=G(V_t)$, we have
\begin{equation}
\label{eq:ri} \cut(R_t, V_t \setminus R_t) \le  |R_t|_{h^{V_t}} +
L_1 \leq |R_t|_{h^{V_t}}+|R_t|_{h} = |R_t|_{2h}+\cut(R_t,V\setminus
V_t)\, .
\end{equation}
Notice that $\cut(R_t,V_t\setminus R_t)-\cut(R_t,V\setminus V_t)=
\cut(\cup_{s=1}^tR_s,V_{t+1})-\cut(\cup_{s=1}^{t-1}R_s,V_t)$. By
summing up this relation, we have, for all $1 \le t  < l$,
\begin{equation}
 \cut(\cup_{s=1}^t R_s, V \setminus \cup_{s=1}^t R_s)
\le \sum_{s=1}^{t}|R_s|_{2h}
  = |\cup_{s=1}^t R_s|_{2h} \nonumber.
\end{equation}

For each $R_t$, consider a linear arrangement of the induced
subgraph that achieves its cutwidth. Construct a linear arrangement
of $V$ by concatenating the above linear arrangement of each $R_t$
in the order $t=1, 2, \dots,l$. We will show that this ordering
gives us the desired upper bound on the tilted cutwidth of $G$. Let
$S = \cup_{s=1}^{t-1} R_s \cup R$ where $R \subset R_t$ for some $t$
between $1$ and $l$. Then
\begin{eqnarray}
\cut(S, V \setminus S) & \leq & \cut(\cup_{s=1}^{t-1} R_s,
V\setminus \cup_{s=1}^{t-1} R_s)+\cut(R_t, V\setminus V_t)
  + \text{cutwidth}(R_t)
\nonumber\\ & \le &  \cut(\cup_{s=1}^{t-1} R_s, V\setminus
\cup_{s=1}^{t-1} R_s) + \cut(R_t,V\setminus V_t)+|R_t|_h+L_1+C
\nonumber\\ & \le & 2\; \cut(\cup_{s=1}^{t-1} R_s, V\setminus
\cup_{s=1}^{t-1} R_s) +L_1+L_2+C \nonumber\\ & \le &
2|\cup_{s=1}^{t-1} R_s|_{2h} +L_1+L_2+C\, \nonumber.
\end{eqnarray}
\end{proof}

%
%
\section{Eigenvectors and barriers}

As in the last appendix, we consider here a general Markov
chain with state space $\cS$, and let $A\subseteq \cS$ a subset of
configurations.

\begin{lemma}\label{lemma:Eigenvectors}
Let $\psi_0 :\cS\to\reals$ be the unique eigenvector of
$P_A$ with eigenvalue $1-\lambda_{0,A}$ and assume (without loss of
generality by Perron-Frobenius theorem) $\psi_0(x)\ge 0$.
Then there exists $b\ge 0$ such that, letting
$B=\{x\in\cS:\, \psi_0(x)> b\}$, we have
\begin{eqnarray}
\frac{1}{|\cS|}\,
\frac{\sum_{(x,y)\in\partial B}\mu(x)p_{\beta}(x,y)}{\sum_{x\in B}\mu(x)}
\le \lambda_{0,A}\le
\frac{\sum_{(x,y)\in\partial B}\mu(x)p_{\beta}(x,y)}{\sum_{x\in B}\mu(x)}
\end{eqnarray}
\end{lemma}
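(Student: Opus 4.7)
The plan is to exploit the variational characterization of $\lambda_{0,A}$ from Eq.~(\ref{eq:Dirichlet}). Extending $\psi_0$ to all of $\cS$ by setting $\tilde\psi_0 \equiv 0$ on $A$, a short symmetrization using reversibility gives $\lambda_{0,A} = \Dir(\tilde\psi_0)/\E(\tilde\psi_0^2)$ and, more generally, $\lambda_{0,A} = \inf\{\Dir(\varphi)/\E(\varphi^2):\; \varphi \text{ vanishes on } A\}$. The upper bound $\lambda_{0,A} \le \sum_{(x,y)\in\partial B}\mu(x)p_\beta(x,y)/\mu(B)$ then holds for every $B \subseteq \cS\setminus A$ simply by using $\varphi = \mathds{1}_B$ as test function: the Dirichlet form collapses to the boundary flux and the norm to $\mu(B)$. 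The content of the lemma is therefore the construction of a specific level set of $\psi_0$ whose conductance is also bounded above by $|\cS|\,\lambda_{0,A}$.

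For this, I would decompose $\tilde\psi_0$ along its level sets. Let $0 = c_0 < c_1 < \cdots < c_k$ be the distinct values of $\tilde\psi_0$, with $k \le |\cS|$; set $d_j = c_j - c_{j-1}$ and $B_j = \{x \in \cS:\, \tilde\psi_0(x) \ge c_j\}$, so that $\tilde\psi_0 = \sum_{j=1}^k d_j\, \mathds{1}_{B_j}$ with nested sets $B_1 \supseteq \cdots \supseteq B_k$ and consequently $\mu(B_j)$ nonincreasing in $j$. For each ordered pair $(x,y)$ the differences $\mathds{1}_{B_j}(x) - \mathds{1}_{B_j}(y)$ share a common sign, so expanding the square produces only nonnegative cross terms and yields $(\tilde\psi_0(x)-\tilde\psi_0(y))^2 \ge \sum_j d_j^2(\mathds{1}_{B_j}(x)-\mathds{1}_{B_j}(y))^2$. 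Weighting by $\mu(x)p_\beta(x,y)$ and summing,
\begin{equation*}
\Dir(\tilde\psi_0) \;\ge\; \sum_{j=1}^k d_j^2 \sum_{(x,y)\in \partial B_j}\mu(x)p_\beta(x,y).
\end{equation*}
A routine expansion by layer yields $\E(\tilde\psi_0^2) = \sum_j (c_j^2 - c_{j-1}^2)\mu(B_j) = \sum_j d_j(c_j + c_{j-1})\mu(B_j)$.

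The main obstacle is the averaging inequality
\begin{equation*}
\sum_{j=1}^k d_j(c_j + c_{j-1})\mu(B_j) \;\le\; k\sum_{j=1}^k d_j^2 \mu(B_j),
\end{equation*}
which I would prove by summation by parts. Since $\mu(B_j)$ is nonincreasing in $j$, Abel's identity reduces the claim to checking that the partial sums $\sum_{l \le j}\bigl[k d_l^2 - d_l(c_l + c_{l-1})\bigr]$ are nonnegative for every $j \le k$. A short computation collapses this partial sum to $k\sum_{l \le j}d_l^2 - c_j^2$, and Cauchy--Schwarz gives $c_j^2 = \bigl(\sum_{l \le j}d_l\bigr)^2 \le j\sum_{l \le j}d_l^2 \le k\sum_{l \le j}d_l^2$, completing the proof. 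Writing $\rho(B) := \sum_{(x,y)\in\partial B}\mu(x)p_\beta(x,y)/\mu(B)$ and combining the two bounds gives $\lambda_{0,A} \ge \min_j \rho(B_j)/k \ge \min_j \rho(B_j)/|\cS|$. Taking $j^\star$ to be the minimizing index and $b = c_{j^\star - 1}$ yields a level set $B = B_{j^\star}$ satisfying both inequalities of the lemma.
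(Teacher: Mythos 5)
Your proof is correct and follows essentially the same route as the paper's: the indicator test function for the upper bound, the level-set decomposition of the Dirichlet form with the same sign-coherence observation, and a Cauchy--Schwarz estimate $c_j^2 \le j\sum_{l\le j} d_l^2$ for the $L^2(\mu)$ norm. The only difference is bookkeeping in the norm bound — the paper applies Cauchy--Schwarz pointwise to each value $\psi^{(i)}$ and then interchanges the order of summation, whereas you write $\E(\psi_0^2)$ via the layer-cake identity and then compare coefficients by Abel summation — but both reduce to the same inequality and the same $k\le|\cS|$ loss.
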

\begin{proof}
The upper bound follows immediately by substituting
$\varphi(x) = \ind(x\in B)$ in the variational
principle (\ref{eq:Dirichlet}).

In order to prove the lower bound,
let $0=\psi^{(0)}<\psi^{(1)}\le \cdots\le \psi^{(N)}$ be the points in
the image of $\psi_0(\,\cdot\,)$ (obviously $N\le \cS$).
For any $(x,y)$ such that $\psi_0(x) = \psi^{(i)}$,
$\psi_0(y) = \psi^{(j)}$, with $i<j$, we have
$(\psi_0(x)-\psi_0(y))^2\ge \sum_{l=i}^{j-1}(\psi^{(l+1)}-\psi^{(l)})^2$.
Therefore, by letting $B_l=\{x\in\cS:\,\psi_0(x)\ge \psi^{(l)}\}$,
we have
\begin{eqnarray}
\Dir(\psi_0) \ge \sum_{l=1}^{N}W(l)\, (\psi^{(l)}-\psi^{(l-1)})^2\, ,
\;\;\;\;\;\;\;
W(l)  \equiv  \sum_{(x,y)\in\partial B_l}\mu(x)p_{\beta}(x,y)\, .
\end{eqnarray}
On the other hand, $(\psi^{(i)})^2\le
i\,\sum_{l=1}^{i} (\psi^{(l)}-\psi^{(l-1)})^2$.
If $M(l) \equiv \sum_{x} \mu(x)
\ind(\psi_0(x) =\psi^{(l)}) = \mu(B_l)-\mu(B_{l-1})$
\begin{eqnarray}
\E(\psi_0^2) = \sum_{i=0}^NM(i)\;(\psi^{(i)})^2\le
\sum_{l=1}^{N}\Big(\sum_{i=l}^{N} i\, M(i)\Big)\;(\psi^{(l)}-\psi^{(l-1)})^2\, .
\end{eqnarray}
Therefore
\begin{eqnarray}
\lambda_{0,A} = \frac{\Dir(\psi_0)}{\E(\psi_0^2)}\ge
\inf_{1\le l\le N}\frac{W(l)}{\sum_{i=l}^{N} i\, M(i)}\, ,
\end{eqnarray}
which implies the thesis.
\end{proof}

\end{document}